\theoremstyle{plain}
\newtheorem{theorem}{Theorem}
\newtheorem{lemma}{Lemma}
\newtheorem{remark}{Remark}
\theoremstyle{definition}
\newtheorem{definition}{Definition}
\newcommand{\tket}{$\text{t}\!\ket{\mathrm{ket}}$}
\newcommand{\sabre}{\textsf{SABRE}}
\newcommand{\mcts}{\textsf{MCTS}}
\newcommand{\sahs}{\textsf{SAHS}}
\newcommand{\fidls}{\textsf{FiDLS}}
\newcommand{\ag}{\mathbb{AG}}
\newcommand{\V}{\mathbb{V}}
\newcommand{\E}{\mathbb{E}}
\newcommand{\qknob}{\texttt{QKNOB}}
\newcommand{\queko}{\texttt{QUEKO}}
\newcommand{\swap}[2]{\textsc{swap}({#1},{#2})}
\newcommand{\define}{\ensuremath{\triangleq}}
\newcommand{\la}{\langle}
\newcommand{\ra}{\rangle}
\newcommand{\cx}[2]{\langle{#1,#2}\rangle}
\newcommand{\glink}[3]{{#1}\xrightarrow{#2}{#3}}
\newcommand{\swapnorm}[1]{\left\lVert#1\right\rVert}
\newcommand{\xRightarrow}[2][]{\ext@arrow 0359\Rightarrowfill@{#1}{#2}}
\newcommand{\sglink}[3]{{#1}\xRightarrow{#2}{#3}}
\newcommand{\sem}[1]{\llbracket #1 \rrbracket}
\begin{document}

\title{Benchmarking Quantum Circuit Transformation with QKNOB Circuits\thanks{Work partially supported by the Australian Research Council (DP220102059) and the National Science Foundation of China (12071271).}
}

\author{Sanjiang~Li, \thanks{Sanjiang Li is with Centre for Quantum Software and Information (QSI), Faculty of Engineering and Information Technology, University of Technology Sydney, NSW 2007, Australia (e-mail: sanjiang.li@uts.edu.au).}       Xiangzhen~Zhou,\thanks{Xiangzhen Zhou is with Nanjing Tech University (email: xiangzhenzhou@njtech.edu.cn).}
Yuan~Feng \thanks{Yuan Feng is with the Department of Computer Science and Technology, Tsinghua University (email: yuan\_feng@tsinghua.edu.cn).}

}



\maketitle

\begin{abstract}
Current {superconducting} quantum devices impose strict connectivity constraints on quantum circuit execution, necessitating circuit transformation before executing quantum circuits on physical hardware. Numerous quantum circuit transformation (QCT) algorithms have been proposed.  To enable faithful evaluation of state-of-the-art QCT algorithms, this paper introduces \qknob\ (Qubit mapping Benchmark with Known Near-Optimality), a novel benchmark construction method for QCT. \qknob\ circuits have built-in transformations with near-optimal (close to the theoretical optimum) SWAP count and depth overhead. \qknob\ provides general and unbiased evaluation of QCT algorithms. Using \qknob, we demonstrate that \sabre, the default Qiskit compiler, consistently achieves the best performance on the 53-qubit IBM Q Rochester and Google Sycamore devices for both SWAP count and depth objectives. Our results also reveal significant performance gaps relative to the near-optimal transformation costs of \qknob. Our construction algorithm and benchmarks are open-source.
\end{abstract}

\begin{IEEEkeywords}
architecture,
hardware/software co-design,
performance optimization,
placement,
routing
\end{IEEEkeywords}

\section{Introduction}
\label{sec:intro}

{Superconducting} noisy intermediate-scale quantum (NISQ) devices impose strict connectivity constraints on quantum circuit execution. This necessitates a crucial compilation step known as \emph{quantum circuit transformation} (QCT, also referred to as transpilation, qubit mapping, or layout synthesis). QCT adapts ideal circuits for execution on physical quantum devices by ensuring that two-qubit gate (like CNOT or CZ) can only be performed between neighbouring qubits.

QCT has become a significant research focus in quantum computing \cite{ChildsSU19-qct,Cowtan+19-tket,Nannicini+21_bipmapping,Saeedi+11_synthesis,Venturelli+18_Planner}, electronic design automation \cite{Ash-Saki+19_qure,Deng0L20_codar,Itoko+19_commutation,TanC21-gate_absorption,Xie+21_commutativity,ZhouFL20_MCTS_iccad,Zhou+22_MCTS_Todaes,Zhou+20_SAHS,Zulehner+18_Astar}, and computer architecture \cite{Li+19-sabre,Liu+22_not_all_swap,Murali+19,TannuQ19, Zhang+21-time}. This paper refers to this procedure as, {largely} interchangeably, qubit mapping and (quantum) circuit transformation.  

Over the past several years, numerous QCT algorithms have been proposed for mapping ideal circuits to physical quantum devices \cite{ChildsSU19-qct,Cowtan+19-tket,Nannicini+21_bipmapping,Saeedi+11_synthesis,Venturelli+18_Planner,Ash-Saki+19_qure,Deng0L20_codar,Itoko+19_commutation,TanC21-gate_absorption,Xie+21_commutativity,ZhouFL20_MCTS_iccad,Zhou+22_MCTS_Todaes,Zhou+20_SAHS,Zulehner+18_Astar,Li+19-sabre,Liu+22_not_all_swap,Murali+19,TannuQ19, Zhang+21-time,Siraichi+18, Booth+18_Planning,Almeida+19_permutation,siraichi+19_bmt,LiZF21_fidls,li-iccad23sqgm,Niemann+21_combine_remote_cnot,Bandic23qmi,Huang24dasatom}. These algorithms typically involve constructing an initial mapping and inserting SWAP gates as needed to ensure that all two-qubit gates  comply with the device's connectivity constraints. The cost of a transformation is usually measured by the \emph{SWAP count} (number of inserted SWAP gates) or the \emph{depth overhead} (increase in circuit depth). Determining whether a transformation exists with a SWAP count or depth overhead below a given threshold has been proven to be NP-complete \cite{Siraichi+18,TanC21-queko}. 
Consequently, exact algorithms \cite{Nannicini+21_bipmapping, Venturelli+18_Planner, Booth+18_Planning,TanC20_iccad_optimal} are often computationally intractable for circuits with more than approximately 10 qubits. Therefore, most QCT algorithms are heuristic.

{Evaluating the performance of QCT algorithms is challenging, as their effectiveness depends on the target hardware architecture, input circuit structure, and optimisation objectives. Benchmarking offers a systematic approach to comparing QCT algorithms through standardised circuits and evaluation metrics \cite{li-tqc23qasmbench,quetschlich2023mqtbench,chen2022veriqbench,nation2024benchmarking,TanC21-queko}. These benchmarks enable controlled performance assessments, ensuring fair comparisons and reproducible results, while also facilitating the identification of algorithmic strengths and weaknesses to guide future development.}

{While reversible circuit benchmarks like RevLib (\url{https://www.revlib.org/}) and quantum circuits from QASMBench \cite{li-tqc23qasmbench} and MQTBench \cite{quetschlich2023mqtbench} are valuable for assessing scalability and practical applicability, they lack \emph{known} optimal transformation costs, hindering the evaluation of how closely a given transformation approximates the theoretical optimum. To address this issue, the \queko\ benchmark  \cite{TanC21-queko} was introduced, featuring circuits with zero-cost optimal transformations. However, \queko\ circuits may not be generalisable to typical QCT scenarios, potentially leading to skewed evaluation, particularly for algorithms relying on subgraph isomorphism for initial mappings (e.g., \cite{siraichi+19_bmt, LiZF21_fidls}). These limitations underscore the need for more versatile and representative benchmarking frameworks.}

\IEEEpubidadjcol

{To address these challenges, this paper introduces \qknob\ (\emph{Qubit mapping Benchmark with Known Near-Optimality}), a novel framework for evaluating QCT algorithms.  The construction of \qknob\ circuits is based on theoretical guarantees provided by Theorems~\ref{thm:qct} and~\ref{thm:quekno}. Theorem~\ref{thm:qct} demonstrates that any circuit transformation can be represented as a `partition-and-permute' process, where the circuit is partitioned into subcircuits and transformed via an initial mapping followed by a sequence of permutations. The transformation cost is determined by the number of SWAP gates required to implement these permutations. \qknob\ circuits are constructed by reversing this process: starting with subcircuits derived from subgraph-guided selections, we link them with restricted permutations. This construction, guaranteed by Theorem~\ref{thm:quekno}, ensures that each \qknob\ circuit has a built-in transformation cost determined by the number of SWAP gates required to realise the permutations. To ensure near-optimality, the construction method restricts the types of permutations used, such as those achievable with up to two consecutive SWAPs (for SWAP optimality) or parallel SWAP operations (for depth optimality). This restriction enables the generation of benchmarking circuits that are both representative of realistic quantum computing scenarios and adaptable to the connectivity constraints of various quantum devices.}

For SWAP count and depth optimality evaluation, we generated \qknob\ circuits for three representative quantum devices: IBM Q Tokyo (20 qubits), IBM Q Rochester (53 qubits), and Google Sycamore (53 qubits). {The construction process carefully selects interaction graphs of subcircuits, subgraph embeddings, and permutations to match the unique connectivity constraints of these devices, as will be explained shortly.} Using \qknob\ alongside \queko\ circuits, we evaluated five state-of-the-art QCT algorithms. Among these algorithms, \sabre\ \cite{Li+19-sabre}, the default Qiskit compiler, consistently demonstrated the best performance across both objectives and all three devices. Unlike \queko, \qknob\ benchmarks provided more faithful and unbiased evaluation. {In addition, our evaluation revealed significant gaps between the built-in transformation costs of \qknob\ circuits and the performance of even the best QCT algorithms.}

\vspace*{3mm}

The remainder of this paper is organised as follows. Sec.~\ref{sec:prelimaries} introduces the necessary background on quantum circuits, graphs, and permutations, while Sec.~\ref{sec:qct} discusses QCT process and algorithms. The theoretical foundations and design principles behind \qknob\ are presented in Sections~\ref{sec:quekno-theory} and~\ref{sec:design}. In Sec.~\ref{sec:evaluation}, we evaluate state-of-the-art QCT algorithms on both \qknob\ and \queko\ benchmark sets, providing comparative insights.  Sec.~\ref{sec:discussion} addresses \qknob's scalability and limitations, and Sec.~\ref{sec:conclusion} concludes the work. Proofs for the lemmas and theorems are included in the appendix.

\section{Preliminaries}\label{sec:prelimaries}

{This section covers relevant background on quantum circuits, subgraph isomorphism, and permutations, essential for describing and formalising the construction methods used in this work.}

\subsection{Quantum Circuits}
Quantum circuits are the standard model for describing quantum algorithms. Like classical combinational circuits, a quantum circuit consists of a sequence of quantum gates acting on qubits (quantum bits).  A general state of qubit $q$ has the form $\ket{\psi}=\alpha_0\ket{0}+\alpha_1\ket{1}$ with $\alpha_0,\alpha_1$ being complex values and $|\alpha_0|^2+|\alpha_1|^2=1$.  

Quantum gates are unitary transformations. An $n$-qubit gate is represented as a $2^n\times 2^n$ unitary matrix.  The following one-qubit gates are often used:
\begin{equation*}
\resizebox{0.9\hsize}{!}{%
$X = \begin{pmatrix}
    0& 1\\ 1 & 0
    \end{pmatrix}$,\quad 
    $H = \frac{1}{\sqrt{2}}\begin{pmatrix}
    1& 1\\ 1 & -1
    \end{pmatrix}$,\quad  
    $S = \begin{pmatrix}
    1& 0\\ 0 & i
    \end{pmatrix}$, \quad 
    $T = \begin{pmatrix}
    1& 0\\ 0 & e^{i\frac{\pi}{4}}
    \end{pmatrix}.$
    }
\end{equation*}
CNOT (also called CX) and CZ are two-qubit gates. For any computational basis state $\ket{i}\ket{j}$, CNOT and CZ map $\ket{i}\ket{j}$ to, respectively, $\ket{i}\ket{i\oplus j}$ and $(-1)^{i\cdot j}\ket{i}\ket{j}$, where $\oplus$ denotes exclusive-or and $\cdot$ denotes Boolean conjunction. 

One-qubit gates and CNOT are sufficient to implement an arbitrary quantum gate. In addition, any quantum gate can be approximated with arbitrary precision using only $H, S, T$ and CNOT gates. In particular, the two-qubit SWAP gate, which maps $\ket{i}\ket{j}$ to $\ket{j}\ket{i}$, can be implemented by three CNOT gates, i.e., $\swap{p}{q} = \textsc{cnot}(p,q)\textsc{cnot}(q,p)\textsc{cnot}(p,q)$. 

While different quantum devices may support different universal sets of quantum gates, in practice, the two-qubit gate in such a universal set is CNOT or CZ. As the actual functionality of a one-qubit gate plays no role in QCT, in the rest of this paper, we denote a one-qubit gate simply by the qubit it acts on. For example, an $H$ gate on $q_i$ is simply denoted by $\la q_i\ra$. Analogously, we write a CNOT or CZ gate with control qubit  $q_i$ and target qubit $q_j$ simply as $\la q_i,q_j\ra$. 

\begin{figure}
\begin{tabular}{ll}
\begin{tabular}{l}
\scalebox{0.7}{
\centerline{
\Qcircuit @C=0.8em @R=1.1em {
\lstick{q_{0}}  & \qw  &\targ  \barrier{5}&\qw \barrier{5}&\qw & \targ \barrier{5}   &\gate{H} \barrier{5} & \gate{T} \barrier{5}& \targ  &\qw \barrier{5}  & \qw 
       & \qw \barrier{5}  & \qw\barrier{5} & \qw &\qw\\
\lstick{q_{1}}  & \targ  &\qw & \gate{H}  & \gate{T} &\qw &\gate{H}    & \ctrl{3}& \qw & \gate{T} 
      &\gate{H} &\qw  & \qw & \qw &\qw  \\
\lstick{q_{2}} &\ctrl{-1}   & \qw & \gate{H} & \qw   &\qw  &\qw & \qw & \qw 
      & \targ &\qw & \ctrl{2}  &\gate{T}    &\gate{H}& \qw   \\
\lstick{q_{3}} & \gate{H} & \qw &\qw & \targ &\qw &\qw    & \qw     &  \ctrl{-3} & \qw
        & \gate{T} & \qw  & \targ &\qw & \qw \\
 \lstick{q_{4}} & \qw  & \ctrl{-4} & \gate{H}&\qw &  \ctrl{-4} &\qw  & \targ& \qw
       & \ctrl{-2} & \qw & \targ & \qw  & \qw & \qw  \\
 \lstick{q_{5}} & \gate{H} & \qw  &\gate{T} & \ctrl{-2} & \qw & \gate{H}&\qw &\qw
         &  \qw & \qw  &\qw   & \ctrl{-2}& \qw & \qw \\
  }
  }
  }
\end{tabular}
& \hspace{-5mm}
\begin{tabular}{l}
\includegraphics[width=0.08\textwidth]{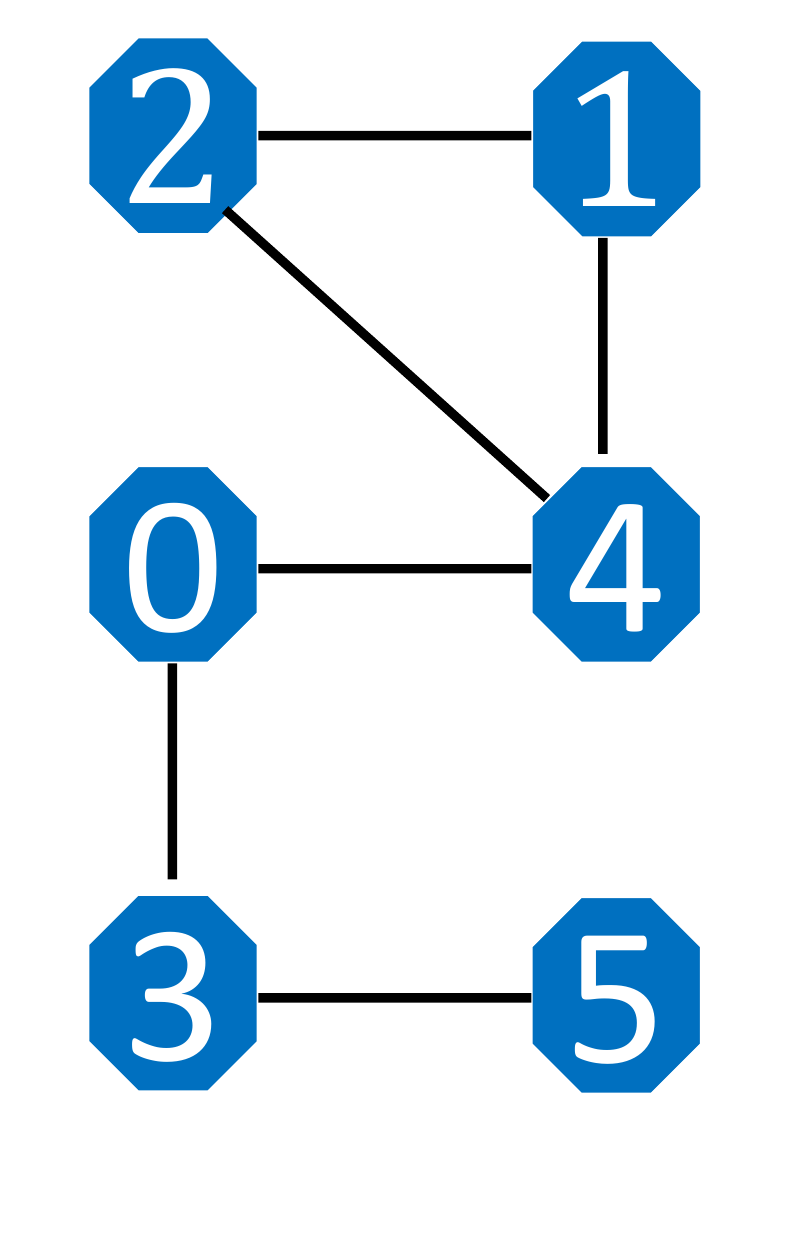}
\end{tabular}
\end{tabular}
 \caption{A quantum circuit (left) and its interaction graph (right). }
\label{fig:quantum_circuit}
\end{figure}

A circuit $C$ is usually represented as a sequence of gates $g_0,g_1,\ldots,g_{M-1}$, but this sequence does not necessarily reflect execution order. Gates acting on distinct qubits may be executed in parallel. Naturally, we partition $C$ into layers, scheduling each gate as early as possible. The \emph{depth} of a circuit is the number of its layers. 

For example, the circuit in Fig.~\ref{fig:quantum_circuit} has a depth of 9, with its first layer containing gates $\la 5\ra, \la 2, 1\ra,\la 4, 0\ra,\la 3\ra $.

\subsection{{Graphs and} Subgraph Isomorphism}
{Graphs naturally arise in quantum circuit transformation.} A quantum device is represented as an undirected \textit{architecture graph} $G = (V, E)$, where vertices in $V$ are physical qubits and edges in $E$ represent allowed two-qubit interactions. That is,  $(p,q)\in E$ if and only if a two-qubit gate can be applied to qubits $p$ and $q$. Since $G$ is undirected, $(p,q)\in E$ if and only if $(q,p)\in E$. Fig.~\ref{fig:ags} shows the architecture graphs of $Grid(3,2)$ (an artificial device) and IBM Q Rochester. {Each quantum circuit also induces an \textit{interaction graph}.}  
\begin{definition}[Interaction Graph]
Let $C$ be a quantum circuit on qubit set $Q$. The interaction graph $(Q,E)$ of $C$ is defined as: for all $p,q\in Q$, $(p,q)\in E$ if and only if $\cx{p}{q}$ or $\cx{q}{p}$ is in $C$.
\end{definition}

{The well-known subgraph isomorphism problem plays a critical role in quantum circuit transformation (see Sec.~\ref{sec:qct}). Specifically, if a subgraph isomorphism $f$ exists that embeds the circuit's interaction graph into the device's architecture graph, the circuit can be executed directly using $f$, eliminating the need for SWAP insertions.}

\begin{definition} [Subgraph Isomorphism]\label{dfn:subgraph_isomorphism}
Given two graphs $G_i=(V_i,E_i)$ $(i=1,2$), we say $G_1$ is a subgraph of $G_2$ if $V_1\subseteq V_2$ and $E_1\subseteq E_2$. $G_1$ is \emph{embeddable} into $G_2$ if there is an injective mapping $f:V_1\to V_2$ such that $(f(p),f(q))\in E_2$ for any edge $(p,q)\in E_1$. In this case, $f$ is called a \emph{subgraph isomorphism} or an \emph{embedding}. 
\end{definition}
Subgraph isomorphisms can be found or disproved by algorithms like VF2  \cite{Cordella+04-vf2}. A subgraph isomorphism can also be quickly disproved by identifying a property that $G_1$ has but $G_2$ lacks (e.g., the presence of a 3-cycle). A 3-cycle in a graph $G$ is a path $(v_0, v_1,v_2,v_3)$ of length 3 in $G$ such that $v_0=v_3$ while $v_0,v_1,v_2$ are pairwise different.  

Consider the circuit $C$  in Fig.~\ref{fig:quantum_circuit} (left).  Its interaction graph contains a 3-cycle $(2,1,4,2)$ (see  Fig.~\ref{fig:quantum_circuit} (right)). Because $Grid(3,2)$ has no 3-cycles, the interaction graph cannot be embedded in $Grid(3,2)$. 

\begin{figure}[tbp]
    \centering
    \begin{tabular}{cc}
    \includegraphics[width=0.1\textwidth]{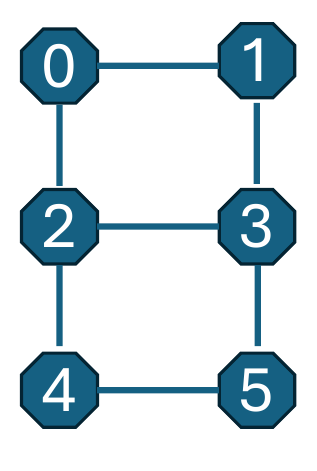} &{\hspace*{6mm}} \includegraphics[width=0.15\textwidth]{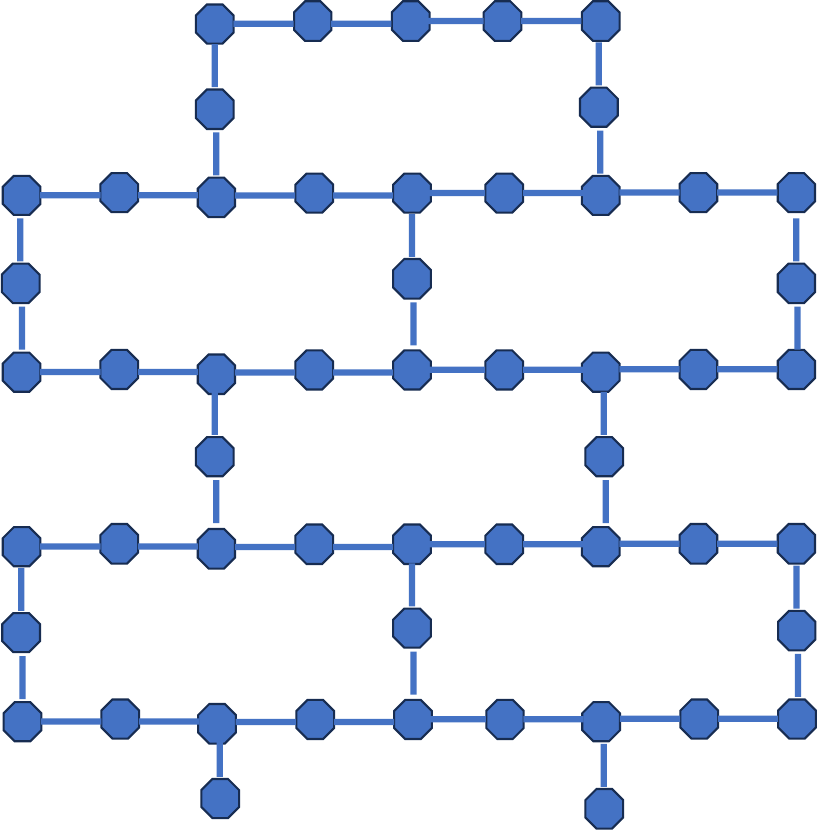}
    \end{tabular}
    \caption{The architecture graphs of $Grid(3,2)$ (left) and IBM Q Rochester (right).}
    \label{fig:ags}
\end{figure}

\subsection{Permutations and SWAP Circuits}
In this paper, we refer to the circuit before transformation as a \emph{logical circuit} and the transformed circuit as a \emph{physical circuit}. Similarly, qubits in a logical (physical) circuit are referred to as logical (physical) qubits. Note that the term `logical' used in QCT should not be confused with its use in error correction.

{Our benchmark circuit construction relies heavily on SWAP circuits (i.e., circuits consisting of SWAP gates) and permutations of physical qubits on the architecture graph. Permutations, implemented via SWAP circuits, modify the mappings from logical to physical qubits; this modification through permutation is essential for constructing circuits with known transformation costs.}

Let $G=(V,E)$ be an undirected graph. Assume $V= [n]\define\set{0,1,\ldots,n-1}$. A \textit{permutation} $\pi: V \to V$ is a bijection on $V$. For example, the identity mapping $id_V$ is a permutation; a SWAP operation on an edge $(i,j)$ of $G$ induces a permutation $\pi_{i,j}$ which maps $i$ to $j$ and $j$ to $i$, leaving other vertices unchanged. Permutations can be composed to form more complicated permutations. 

Formally, a permutation $\pi$ is implementable by a sequence of $c$ SWAPs $\pi_{p_1,q_1},\ldots,\pi_{p_c,q_c}$ if {$\pi=\pi_{p_c,q_c}\circ\cdots\circ \pi_{p_1,q_1}$}, where $(p_i,q_i)$ is an edge in $G$ for $1\leq i\leq c$. Any permutation $\pi$ on a connected graph $G$ can be implemented by a sequence of SWAPs on edges of $G$. We denote by $\swapnorm{\pi}$ the minimum number of SWAPs required to implement $\pi$ and call this the \emph{swap cost} of (implementing) $\pi$.

Since $V=[n]$, we represent each permutation $\pi$ on $V$ as the vector $(\pi(0),\pi(1),\ldots,\pi(n-1))$.  
For example,  $\pi=(3,0,2,1,4,5)$ denotes the permutation on {$V=[6]$} that maps 0 to 3, 1 to 0, 3 to 1 and leaves the other vertices unchanged. We can implement $\pi$ by first swapping 0 and 1 and then swapping 1 and 3. That is, {$\pi = \pi_{1,3}\circ \pi_{0,1}$.} 
Note that permutation composition is not commutative. For example,   ${\pi_{0,1}\circ \pi_{1,3}}=(1,3,2,0,4,5)\not=\pi$. 

Regarding quantum circuit transformation, for each edge $(i,j)$ in an architecture graph $G=(V,E)$, the permutation $\pi_{i,j}$ corresponds to the SWAP gate $\swap{i}{j}$. 

{Next, we explain how a permutation on $G$ can be used to modify a logical-to-physical qubit mapping using SWAP gates.} Suppose $C$ is a quantum circuit on qubit set $Q$. For convenience, we assume $Q\subseteq V$. Let $\sigma:Q\to V$ be the current (logical to physical) mapping. Applying a SWAP gate $\swap{i}{j}$ transforms $\sigma$ into a new mapping $\sigma'=\pi_{i,j}\circ\sigma$, where $\sigma'(p)=j$ if $\sigma(p)=i$, $\sigma'(p)=i$ if $\sigma(p)=j$, and $\sigma'(p)=\sigma(p)$ otherwise. In general, for a SWAP circuit  $S \define (\swap{p_1}{q_1}, \ldots,\swap{p_c}{q_c})$, $S$ transforms $\sigma$ into $\sigma'\define\pi_{p_c,q_c}\circ\cdots\circ\pi_{p_1,q_1} \circ \sigma$. In this case, we say that $\pi \define \pi_{p_c,q_c}\circ\cdots\circ\pi_{p_1,q_1}$ is implemented by $S$. 
Therefore, any permutation $\pi$ on $G$ can be implemented by a SWAP circuit to modify a logical-to-physical mapping. We denote by $\sem{\pi}$ a SWAP circuit that implements $\pi$ with a minimal number of SWAP gates.

{The following lemma formalises how the concatenation of SWAP circuits implements the composition of their corresponding permutations on the architecture graph.}
\begin{lemma}\label{lem:permcomp}
Let $\pi_1,\pi_2$ be two permutations on a graph $G=(V,E)$. Suppose $S_i$ is a SWAP circuit that implements $\pi_i$ for $i=1,2$. Then $S_1+S_2$ implements {$\pi_2\circ\pi_1$}, where `+' denotes circuit concatenation.
\end{lemma}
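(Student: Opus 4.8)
The plan is to unfold the definition of what it means for a SWAP circuit to \emph{implement} a permutation, recalled in the paragraph immediately preceding the lemma, and then to invoke associativity of function composition. Write $S_1 = (\swap{p_1}{q_1}, \ldots, \swap{p_a}{q_a})$ and $S_2 = (\swap{r_1}{s_1}, \ldots, \swap{r_b}{s_b})$. By that definition, the hypotheses that $S_i$ implements $\pi_i$ mean precisely $\pi_1 = \pi_{p_a,q_a}\circ\cdots\circ\pi_{p_1,q_1}$ and $\pi_2 = \pi_{r_b,s_b}\circ\cdots\circ\pi_{r_1,s_1}$.

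First I would form the concatenation $S_1 + S_2 = (\swap{p_1}{q_1}, \ldots, \swap{p_a}{q_a}, \swap{r_1}{s_1}, \ldots, \swap{r_b}{s_b})$ and apply the very same definition to this longer sequence. The permutation it implements is the composition of all the associated transpositions in the prescribed order, namely $\pi_{r_b,s_b}\circ\cdots\circ\pi_{r_1,s_1}\circ\pi_{p_a,q_a}\circ\cdots\circ\pi_{p_1,q_1}$. Regrouping the first $a$ factors and the last $b$ factors by associativity of $\circ$, this equals $(\pi_{r_b,s_b}\circ\cdots\circ\pi_{r_1,s_1})\circ(\pi_{p_a,q_a}\circ\cdots\circ\pi_{p_1,q_1}) = \pi_2\circ\pi_1$, which is exactly the claim.

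Alternatively, and perhaps more transparently, I would argue semantically through the action on mappings. For an arbitrary current mapping $\sigma$, applying $S_1$ yields $\pi_1\circ\sigma$, and then applying $S_2$ to $\pi_1\circ\sigma$ yields $\pi_2\circ(\pi_1\circ\sigma)$, which by associativity is $(\pi_2\circ\pi_1)\circ\sigma$. Since $S_1+S_2$ applies $S_1$ followed by $S_2$, it sends $\sigma$ to $(\pi_2\circ\pi_1)\circ\sigma$, so by definition it implements $\pi_2\circ\pi_1$.

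The only genuine point requiring care is the \emph{order} of composition: because composition of permutations is non-commutative (as stressed in the preceding examples, where $\pi_{0,1}\circ\pi_{1,3}\neq\pi_{1,3}\circ\pi_{0,1}$), the newly appended circuit $S_2$ must contribute on the left, giving $\pi_2\circ\pi_1$ rather than $\pi_1\circ\pi_2$. I expect the main (and only minor) obstacle to be keeping the indexing and the left/right placement straight; there is no combinatorial or structural difficulty, since the result follows immediately once the definition of \textbf{implements} is spelled out.
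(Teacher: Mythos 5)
Your proposal is correct and follows essentially the same route as the paper's proof: both unfold the definition of ``implements'' for the concatenated sequence $S_1+S_2$ and regroup the resulting composition of transpositions by associativity to obtain $\pi_2\circ\pi_1$, with the same care about the order of composition. The alternative semantic argument via the action on mappings is a fine restatement but adds nothing beyond the paper's approach.
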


When constructing \qknob\ circuits, we often permute subgraphs and circuits. {These procedures formalises how subgraphs and circuits are rearranged during circuit construction, ensuring that the relationships between qubits and connectivity constraints are accurately represented.}

Given a graph $G$, a permuted graph is obtained by rearranging its nodes.

\begin{definition}[Permuted Subgraph]\label{dfn:pgraph}
Let $G=(V,E)$ be an undirected graph and  $G_1=(V_1,E_1)$ a subgraph of $G$. The permutation of $G_1$ under a permutation $\pi$ on $V$, denoted $\pi(G_1)$, is the graph  $(\pi(V_1),\pi(E_1))$, where $\pi(V_1) \define \{\pi(v) \mid v\in V_1\}$ and $\pi(E_1) \define \{ (\pi(v),\pi(v')) \mid (v,v')\in E_1\}$. 
\end{definition}

Analogously, a permuted circuit is obtained by rearranging its qubits.
\begin{definition}[Permuted Circuit]\label{dfn:pcirc}
For a  circuit $C\define(g_1,\ldots,g_M)$ on qubits in $V$,  the permutation of $C$ under a permutation $\pi$ on $V$ is $\pi(C) \define (\pi(g_1),\ldots,\pi(g_M))$, where $\pi(g)$ is the same gate as $g$ but operates on 
\begin{itemize}
    \item qubit $\pi(q_i)$ if $g$ is a one-qubit gate on $q_i$,
    \item qubits $\pi(q_i)$ and $\pi(q_j)$ if $g$ is a two-qubit gate on $q_i,q_j$. 
\end{itemize} 
\end{definition}

{The following lemma establishes the reversibility of permutations on circuits and their distributivity over circuit concatenation, which are key properties used in the construction and analysis of \qknob\ circuits.}

\begin{lemma}\label{lem:2}
For a permutation $\pi$ and circuits $C,C_1,C_2$, we have $\pi^{-1}(\pi(C))=C$ and $\pi(C_1+C_2) = \pi(C_1) + \pi(C_2)$. 
\end{lemma}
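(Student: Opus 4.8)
The plan is to prove the two identities separately, reducing each to the definition of a permutated circuit (Definition~\ref{dfn:pcirc}) applied gatewise. Both claims are purely syntactic statements about how $\pi$ acts on the list of gates, so the natural strategy is to peel off the outer definition, reduce to a single arbitrary gate $g$, and check the identity there.

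For the first identity $\pi^{-1}(\pi(C))=C$, I would write $C=(g_1,\ldots,g_n)$ and, using Definition~\ref{dfn:pcirc} twice, observe that $\pi^{-1}(\pi(C)) = \pi^{-1}\big((\pi(g_1),\ldots,\pi(g_n))\big) = \big(\pi^{-1}(\pi(g_1)),\ldots,\pi^{-1}(\pi(g_n))\big)$. It then suffices to show $\pi^{-1}(\pi(g))=g$ for an arbitrary gate $g$. I would split into the two cases of the definition: if $g$ is a 1-qubit gate on $q_i$, then $\pi(g)$ acts on $\pi(q_i)$ and $\pi^{-1}(\pi(g))$ acts on $\pi^{-1}(\pi(q_i))=q_i$, so it equals $g$; if $g$ is a 2-qubit gate on $q_i,q_j$, the same cancellation $\pi^{-1}(\pi(q_i))=q_i$ and $\pi^{-1}(\pi(q_j))=q_j$ holds on both operand qubits, and since permutation only renames qubits without altering the gate's type, $\pi^{-1}(\pi(g))=g$. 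The key fact here is simply that $\pi^{-1}\circ\pi=id_V$ as a map on qubits.

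For the distributivity identity $\pi(C_1+C_2)=\pi(C_1)+\pi(C_2)$, I would recall that $+$ denotes concatenation of gate sequences: if $C_1=(g_1,\ldots,g_m)$ and $C_2=(h_1,\ldots,h_\ell)$, then $C_1+C_2=(g_1,\ldots,g_m,h_1,\ldots,h_\ell)$. Applying Definition~\ref{dfn:pcirc} to this concatenated sequence gives $\pi(C_1+C_2)=(\pi(g_1),\ldots,\pi(g_m),\pi(h_1),\ldots,\pi(h_\ell))$, because the definition acts on each entry independently and is oblivious to where the boundary between $C_1$ and $C_2$ lies. This is exactly the concatenation of $\pi(C_1)=(\pi(g_1),\ldots,\pi(g_m))$ with $\pi(C_2)=(\pi(h_1),\ldots,\pi(h_\ell))$, establishing the claim.

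There is no real obstacle here: both statements follow immediately from the gatewise nature of Definition~\ref{dfn:pcirc} and the fact that $\pi$ is a bijection on $V$. The only point requiring the slightest care is to confirm that applying $\pi$ to a gate changes only the identities of the qubits it acts on and nothing else about the gate, so that the qubit-level cancellation $\pi^{-1}(\pi(q))=q$ and the entrywise splitting of concatenation are all that is needed. I would therefore keep the argument short, treating it essentially as an unwinding of definitions rather than a computation.
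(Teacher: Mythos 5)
Your proof is correct and takes essentially the same approach as the paper, which simply notes that both identities follow directly from Definition~\ref{dfn:pcirc}; you merely spell out the gatewise unwinding that the paper leaves implicit.
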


\section{Quantum Circuit Transformation}
\label{sec:qct}
{Quantum circuit transformation (QCT) is crucial for compiling quantum algorithms to execute on quantum devices with limited connectivity. This section outlines the fundamental steps involved in QCT, including mapping transformations and gate executions, and provides a theoretical framework for understanding the cost and structure of these transformations, focusing on core QCT components and their relevance to \qknob\ circuit construction.}

\subsection{Overview of Quantum Circuit Transformation}\label{sec:qct_overview}

When designing a quantum algorithm using the quantum circuit model, the designer often does not consider a specific quantum device. Consequently, the resulting ideal quantum circuit may contain two-qubit gates acting on arbitrary qubit pairs, which may violate hardware connectivity constraints. Therefore, QCT is necessary.

Let $\ag \define (\V, \E)$ be the architecture graph of a given quantum device.
Device-supported one-qubit gates can be executed directly on $\ag$.  A device-supported two-qubit gate is \emph{directly executable} on $\ag$ if its two qubits are adjacent in $\ag$. A quantum circuit $C$ is referred to as an $\ag$-circuit if all its two-qubit gates are directly executable given the device's connectivity constraints. 

If $C$ is not an $\ag$-circuit, transformation is necessary for execution. Let $Q$ be the qubit set of $C$, assuming w.l.o.g. that $Q\subseteq \V$.  To execute $C$ on $\ag$, we first construct an \textit{initial mapping} $\sigma_1: Q \to \V$. If all gates in $C$ are executable under $\sigma_1$ (i.e., for every two-qubit gate $\cx{u}{v}$ in $C$, $(\sigma_1(u),\sigma_1(v))\in \E$), we say that $C$ is executable on $\ag$ under $\sigma_1$. This occurs if and only if $\sigma_1(C)$ is an $\ag$-circuit. 

If $C$ is not executable under $\sigma_1$, QCT alternates between two key procedures:
\begin{enumerate}
    \item \textbf{Mapping Transformation}: Modify the current mapping by inserting SWAP gates.
    \item \textbf{Gate Execution}: Remove gates from the logical circuit that become executable under the updated mapping.
\end{enumerate}
The process repeats until all gates in $C$ have been removed.

\subsection{Transformation by Partitioning and Permuting} \label{sec:part}
A straightforward, though not necessarily optimal, approach to QCT (see \cite{siraichi+19_bmt,LiZF21_fidls}) involves partitioning the input circuit $C$ into nonempty subcircuits ${C}_1,\ldots,{C}_s$,  such that each  ${C}_i$ is executable under a specific injective mapping  $\sigma_i: Q_i\to \V$, where $Q_i\subseteq Q$ is the set of qubits in $C_i$. Each transformed subcircuit, denoted $\widetilde{C}_i\define \sigma_i(C_i)$, becomes  an $\ag$-circuit. Since $C_i = \sigma^{-1}_i(\widetilde{C}_i)$ for $1\leq i\leq s$, we have
\begin{align}\label{eq:circpart}
C = C_1 +\cdots + C_s=\sigma_1^{-1}(\widetilde{C}_1) +  \cdots + \sigma_s^{-1}(\widetilde{C}_s),
\end{align}
where each $\sigma_i^{-1}$ is defined on $\sigma_i(Q)\subseteq \V$. 

The transformation and execution proceed as follows:

\vspace*{1mm}
\noindent\textbf{Initial Mapping Application:} Apply $\sigma_1$ to $C$. This transforms in particular the first subcircuit into the $\ag$-circuit $\widetilde{C}_1$. We remove all gates in $C_1$ from $C$ (since they are all executable), setting $PC_1 = \widetilde{C}_1$ for the current physical circuit and $LC_1 = C_2+\cdots + C_s$ for the remaining logical circuit. We then append  to $PC_{1}$ the SWAP circuit $\sem{\sigma_1^{-1}} $, which implements ${\sigma_1^{-1}}$ with the minimal number of SWAPs. This does not change $LC_1$ but updates  $PC_1$ to $\widetilde{C}_1+\sem  {\sigma_1^{-1}} $ and reverts the current mapping ${\sigma_1}$ to the identity mapping $id_{\V}$. 
    
\noindent\textbf{Subsequent Subcircuits:} Suppose the current mapping is $id_{\V}$ and the logical and physical circuits are $LC_{i-1}=C_i+\cdots + C_s$ and $PC_{i-1}$ for some $1<i<s$. We append a SWAP circuit $\sem{\sigma_i} $ to $PC_{i-1}$ immediately after $\sem{\sigma_{i-1}^{-1}} $. This changes the current mapping to $\sigma_i$ and transforms the $i$-th subcircuit $C_i$ to $\widetilde{C}_i$. We append $\widetilde{C}_i$ to $PC_{i-1}$, obtaining 
    \[
    \resizebox{1\hsize}{!}{%
    $PC_i = {\widetilde{C}_1+\sem{\sigma_1^{-1}}  +\sem{\sigma_2} + \cdots + \widetilde{C}_{i-1} + \sem{\sigma_{i-1}^{-1}} + \sem{\sigma_i} + \widetilde{C}_i.}$
    }
    \] 
The logical circuit is updated to $LC_i = C_{i+1}+\cdots+C_s$. Next, we append a SWAP circuit $\sem{\sigma_i^{-1}} $ to $PC_i$. This does not change $LC_i$ but updates $PC_i$ to $PC_i+\sem{\sigma_i^{-1}} $ and reverts the current mapping $\sigma_{i}$ to $id_{\V}$.

\noindent\textbf{The Last Subcircuit:} In the final step ($i=s-1$), $LC_i = C_s$, and the current mapping is $id_{\V}$. We append a SWAP circuit $\sem{\sigma_s}$ to $PC_{s-1}$. This changes the current mapping to ${\sigma_s}$ and transforms the last subcircuit $C_s$ to $\widetilde{C}_s$. We remove all gates in $C_s$ from $LC_{s-1}$, resulting an empty logical circuit. Accordingly, the final physical circuit is 
    \begin{align*}
    \resizebox{1\hsize}{!}{%
           {$PC_s = \widetilde{C}_1 + \sem{\sigma_1^{-1}} +\sem{\sigma_2} + \cdots +   \widetilde{C}_{s-1} + \sem{\sigma_{s-1}^{-1}} + \sem{\sigma_s}+\widetilde{C}_s$.}
           }
    \end{align*}

By Lemma~\ref{lem:permcomp}, two consecutive SWAP circuits can be replaced with a single circuit. In particular, we can replace {$\sem{\sigma_i^{-1} } + \sem{\sigma_{i+1}}$ with $\sem{  \sigma_{i+1}\circ\sigma_i^{-1}}$}. Thus, the final physical circuit can be written as 
\begin{align}\label{eq:physical-circuit-final}
\resizebox{0.9\hsize}{!}{%
    {$PC = \widetilde{C}_1 + \sem{\sigma_2\circ \sigma_1^{-1}}+ \cdots + \widetilde{C}_{s-1} + \sem{\sigma_s\circ\sigma_{s-1}^{-1}}+\widetilde{C}_s$.}
    }
\end{align}
The transformation ensures equivalence between $PC$ and $C$, up to an initial mapping $\sigma_1$ and a final mapping $\sigma_s$.

In summary, the logical circuit $C$ has been transformed into the physical circuit $PC$ by (i) applying the initial mapping $\sigma_1$ on $C$, and (ii) inserting SWAP circuit $S_i\define \sem{\sigma_{i+1}\circ\sigma_{i}^{-1}}$ between $C_i$ and $C_{i+1}$ for $1\leq i\leq s-1$. As the initial and final mappings incur no cost, the total transformation cost is the number of SWAPs used in the SWAP circuits $\sem{\sigma_{i+1}\circ\sigma_i^{-1} }$ for $1\leq i<s$, i.e., the number of SWAPs  in $S_1+\cdots +S_{s-1}$.

\vspace*{2mm}
\noindent\textbf{A Key Theoretical Result:} The above process provides a specific circuit transformation that can be represented as a sequence of subcircuits and permutations. Moreover, any arbitrary transformation of $C$ into a physical circuit $PC$ on $\ag$ can be achieved in the same manner. Specifically, let $\sigma_1$ be the initial mapping, and let $S_1,\ldots,S_{s-1}$ be the sequence of SWAP circuits inserted into $C$. We can partition the circuit $C$ into subcircuits $C_1,\ldots,C_s$ as in Eq.~\ref{eq:circpart} and represent $PC$ as in Eq.~\ref{eq:physical-circuit-final}, where $\sigma_i$ $(1\leq i\leq s)$ are permutations such that $S_i=\sem  {\sigma_{i+1}\circ\sigma_i^{-1} }$.

This leads to the following theorem:
\begin{theorem}\label{thm:qct}
Let $C$ be a logical circuit and $\ag$ the architecture graph of a quantum device. For any transformation of $C$ on $\ag$ with cost $c$, there exist a partition of $C$ into $1\leq s\leq c+1$ nonempty subcircuits $C_1,\ldots,C_s$ and $s$ permutations $\sigma_i$ $(1\leq i\leq s)$ on $\ag$ such that $\sum_{i=2}^s \swapnorm{{\sigma_{i}\circ\sigma_{i-1}^{-1}}} = c$, and for each $1\leq i\leq s$,
$\widetilde{C}_i\define \sigma_i(C_i)$ is an $\ag$-circuit. Moreover, the transformed circuit has the form shown in Eq.~\ref{eq:physical-circuit-final}.
\end{theorem}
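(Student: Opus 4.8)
My plan is to extract the required partition directly from an optimal transformation and then exploit optimality to obtain the two quantitative claims. I would begin with any transformation of $C$ attaining the optimal cost $k$, consisting of an initial mapping together with a sequence of inserted SWAP gates. Grouping the inserted SWAPs into maximal consecutive runs $S_1,\ldots,S_{s-1}$ (runs not interrupted by the execution of a gate) and collecting into $C_i$ the gates executed between $S_{i-1}$ and $S_i$ reproduces exactly the data appearing in the converse discussion just before the theorem: an initial mapping $\sigma_1$, SWAP circuits $S_1,\ldots,S_{s-1}$, and a partition $C=C_1+\cdots+C_s$ as in Eq.~\ref{eq:circpart}, where $\sigma_i$ is the logical-to-physical mapping in force while $C_i$ is executed. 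That discussion already gives, for free, that each $\widetilde C_i\define\sigma_i(C_i)$ is an $\ag$-circuit, that $S_i$ implements $\sigma_{i+1}\circ\sigma_i^{-1}$, and---after collapsing $\la\!\la\sigma_i^{-1}\ra\!\ra+\la\!\la\sigma_{i+1}\ra\!\ra$ to $\la\!\la\sigma_{i+1}\circ\sigma_i^{-1}\ra\!\ra$ by Lemma~\ref{lem:permcomp}---that the physical circuit has the form of Eq.~\ref{eq:physical-circuit-final}. It then remains to pin down $\sum_{i=2}^s\swapnorm{\sigma_i\circ\sigma_{i-1}^{-1}}=k$ and $s\le k+1$.

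Next I would normalise this partition so that it meets the two structural requirements of the statement, namely nonempty sections and genuinely distinct consecutive mappings. Because the runs $S_i$ are maximal, every intermediate section is automatically nonempty; an empty first section would mean SWAPs are applied before any gate, which may be absorbed into the cost-free initial mapping, and an empty last section would mean SWAPs are applied after the last gate, which may simply be deleted---both contradicting optimality if the discarded run is nonempty. Similarly, if $\sigma_i=\sigma_{i+1}$ for some $i$ then $S_i$ implements the identity and is removable, so $C_i$ and $C_{i+1}$ may be merged into one section executable under the common mapping; this drops the vanishing term $\swapnorm{\sigma_{i+1}\circ\sigma_i^{-1}}=\swapnorm{id_{\V}}=0$ and lowers $s$ by one. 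After these reductions every section is nonempty and $\sigma_i\neq\sigma_{i+1}$, hence $\swapnorm{\sigma_{i+1}\circ\sigma_i^{-1}}\ge1$ for each of the $s-1$ gaps.

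The quantitative heart is then a two-sided estimate. Since $S_i$ implements $\sigma_{i+1}\circ\sigma_i^{-1}$, the defining minimality of $\swapnorm{\cdot}$ gives $|S_i|\ge\swapnorm{\sigma_{i+1}\circ\sigma_i^{-1}}$, so summing over $i$ yields $\sum_{i=1}^{s-1}\swapnorm{\sigma_{i+1}\circ\sigma_i^{-1}}\le k$. Conversely, replacing each $S_i$ by the minimal SWAP circuit $\la\!\la\sigma_{i+1}\circ\sigma_i^{-1}\ra\!\ra$ is itself a legitimate transformation of $C$---this is precisely the construction of the enumerated list---whose cost $\sum_{i=1}^{s-1}\swapnorm{\sigma_{i+1}\circ\sigma_i^{-1}}$ can therefore be no smaller than the optimum $k$. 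The two bounds coincide, and reindexing gives $\sum_{i=2}^{s}\swapnorm{\sigma_i\circ\sigma_{i-1}^{-1}}=k$; combined with $\swapnorm{\sigma_{i+1}\circ\sigma_i^{-1}}\ge1$ this forces $s-1\le k$, i.e.\ $s\le k+1$, while $s\ge1$ is immediate.

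I expect the main difficulty to lie not in the inequality sandwich but in making the extraction of the partition fully rigorous: one must justify that grouping the inserted SWAPs into maximal runs and gathering the gates executed in between really does split $C$ into consecutive sections $C_1+\cdots+C_s$ (reconciling the linear order of $C$ with the commutations implicit in the layered view of a circuit), and that each merging or boundary reduction preserves both executability and the total swap count. Once this correspondence with the pre-theorem construction is made precise, the remainder is the short optimality argument above.
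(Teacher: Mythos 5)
Your proposal is correct and follows essentially the same route as the paper's proof: extract the partition from an optimal transformation by grouping the inserted SWAPs into runs, observe that optimality forces each section to be nonempty and each run $S_i$ to be a minimal implementation of $\sigma_{i+1}\circ\sigma_i^{-1}$ (whence $\sum_{i=2}^s\swapnorm{\sigma_i\circ\sigma_{i-1}^{-1}}=k$ and $s\le k+1$). Your write-up is somewhat more explicit than the paper's --- in particular the two-sided inequality sandwich and the handling of identity-implementing runs and empty boundary sections, which the paper leaves implicit in its assertion that $S_i=\la\!\la\pi_{i+1}^{-1}\ra\!\ra$ --- but the underlying argument is the same.
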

{The circuit transformation form described in Theorem~\ref{thm:qct} and shown in Eq.~\ref{eq:physical-circuit-final} is called the partition-and-permute transformation.}

\subsection{A Circuit Transformation Example}\label{sec:qct-ex}
{This subsection provides an example demonstrating how initial mapping and SWAP insertions resolve connectivity constraints and how the transformation can be expressed in the partition-and-permute form described in Theorem~\ref{thm:qct}.}

Let $\ag$ be the $Grid(3,2)$ architecture graph (see Fig.~\ref{fig:ags}, left); and consider the logical circuit shown in Fig.~\ref{fig:quantum_circuit}. Suppose our target is to minimise the SWAP count. Because one-qubit gates can be executed directly, we remove them, leaving the logical circuit
\begin{align*}
\resizebox{1\hsize}{!}{%
     $C =   \big[\la 2, 1\ra, \la 4, 0\ra, \la 4, 0\ra, \la 5, 3\ra, \la 1, 4\ra, \la 4, 2\ra, \la 2, 4\ra, \la 3, 0\ra, \la 5, 3\ra\big].$
     }
\end{align*}
Because $C$'s interaction graph (Fig.~\ref{fig:quantum_circuit}) contains a 3-cycle $(2,1,4,2)$, it cannot be embedded into $\ag$; thus this circuit is not executable under any initial mapping. We next show that it can be transformed into an executable circuit with one SWAP. 

First, we partition $C$ into two subcircuits such that each subcircuit's interaction graph is embeddable in $\ag$. For example, let 
$C_1 = [\la 2, 1\ra, \la 4, 0\ra, \la 4, 0\ra, \la 5, 3\ra, \la 1, 4\ra]$, and
$C_2 = [ \la 4, 2\ra, \la 2, 4\ra, \la 3, 0\ra, \la 5, 3\ra]$. Using a subgraph isomorphism algorithm, we find a mapping  $\sigma_1= (5,1,0,4,3,2)$ that transforms $C_1$ into an $\ag$-circuit. 

Note that $\sigma_1$ permutes the whole circuit  as
\begin{align*}
\resizebox{1\hsize}{!}{%
$ \sigma_1(C) =   
\big[\la 0, 1\ra, \la 3, 5\ra, \la 3, 5\ra, \la 2, 4\ra, \la 1, 3\ra, \la 3, 0\ra, \la 0, 3\ra, \la 4, 5\ra, \la 2, 4\ra \big]$.
}
\end{align*}
Because all gates in $\sigma_1(C_1)$ act on neighbouring physical qubits in $\ag$, they are executable and therefore removed from $\sigma_1(C)$. The remaining circuit is $\sigma_1(C_2) = [\la 3, 0\ra, \la 0, 3\ra$, $\la 4, 5\ra, \la 2, 4\ra]$. Because $\la 2,4\ra$ and $\la 4,5\ra$ correspond to edges in $\ag$, we only need bring the logical qubit mapped to physical qubit $0$ next to the logical qubit mapped to physical qubit $3$ (or vice versa). This can be achieved by inserting the SWAP gate $\swap{0}{1}$, which implements the permutation $\pi_{0,1}$. Note that $\pi_{0,1}^{-1}=\pi_{0,1}$. Because $\pi_{0,1}\big(\sigma_1(C_2)\big)= [\la 3,1\ra,\la 1,3\ra,\la 4,5\ra,\la 2,4\ra]$ is an $\ag$-circuit, it can be executed directly. 

In conclusion, to transform $C$, we apply an initial mapping $\sigma_1$ and then insert $\swap{0}{1}$. Because the SWAP cost is 1, the transformation is optimal. 

Let $\sigma_2 = \pi_{0,1} \circ \sigma_1$. Then, the circuit $C_i$ is executable under $\sigma_i$ for $i = 1, 2$. By definition, $\sigma_{2} \circ \sigma_1^{-1} = (\pi_{0,1} \circ \sigma_1) \circ \sigma_1^{-1} = \pi_{0,1}$. Consequently, the set $S_1 = \sem{ \sigma_{2} \circ \sigma_1^{-1} }$ consists of a single $\swap{0}{1}$. This construction explicitly relates the above transformation to the form described in Theorem~\ref{thm:qct}.

\subsection{{Circuit Transformation Implementations}}

Many QCT algorithms have been developed. Some aim to find transformations with the optimal cost \cite{Saeedi+11_synthesis, Siraichi+18, Venturelli+18_Planner, Booth+18_Planning,Murali+19,Almeida+19_permutation,Nannicini+21_bipmapping, TanC20_iccad_optimal}, but these are applicable only to circuits with approximately ten or fewer qubits. Most algorithms employ heuristic search techniques \cite{Zulehner+18_Astar,Li+19-sabre,ChildsSU19-qct,Cowtan+19-tket,Zhou+20_SAHS,siraichi+19_bmt,LiZF21_fidls,ZhouFL20_MCTS_iccad}. 
These heuristic algorithms can be further classified by their optimisation objective. Some aim to maximise fidelity or minimise the error rate \cite{Murali+19,Ash-Saki+19_qure,TannuQ19,Deng0L20_codar}; many aim to reduce the SWAP count \cite{Zulehner+18_Astar,Li+19-sabre,LiZF21_fidls,Zhou+20_SAHS,ZhouFL20_MCTS_iccad}; and others aim to reduce the output circuit's depth \cite{Cowtan+19-tket,Zhang+21-time,Zhou+22_MCTS_Todaes,li-iccad23sqgm} to respect the limited qubit coherence time.  

Several QCT algorithms have used subgraph isomorphism. The BMT algorithm \cite{siraichi+19_bmt} combines subgraph isomorphism with token swapping and is closely related to the general transformation pattern as specified in Eq.~\ref{eq:circpart}. Based on exhaustive search,  BMT consumes significant time and memory, making it unsuitable for circuits with 20 or more qubits. Using the subgraph isomorphism algorithm VF2 \cite{Cordella+04-vf2}, \fidls\ \cite{LiZF21_fidls} selects an initial mapping that brings the input circuit's interaction graph closer to the architecture graph. When circuits are directly executable, \fidls\ can often find an embedding and transform the circuit without inserting any SWAPs. This is especially true for \queko\  circuits \cite{TanC21-queko}. 
A recent work \cite{Huang24dasatom} also employs subgraph isomorphism in a divide-and-shuttle atom approach for qubit mapping on neutral-atom quantum devices.

Other works exploit commutation rules \cite{Itoko+19_commutation,Xie+21_commutativity}, synthesis \cite{TanC21-gate_absorption}, gate optimisations \cite{Liu+22_not_all_swap}, or remote CNOT \cite{Niemann+21_combine_remote_cnot} for circuit transformation. These techniques can be combined with most QCT algorithms.  In this paper, we do not consider these extensions.

Our evaluation focuses on five state-of-the-art QCT algorithms: \tket, \sabre, \sahs, \mcts, and the Qiskit transpiler \texttt{StochasticSWAP}.

\tket\ is a powerful quantum circuit compiler tool proposed by Quantinuum. First described in \cite{Cowtan+19-tket}, the \tket\ router attempts to select the SWAP operation which maximally reduces the interaction graph's diameter for the current layer. 
\sabre\ \cite{Li+19-sabre} adopts a three-fold strategy and a heuristic function that models the fitness of a mapping with the two-qubit gates in the first several layers of the circuit. Specifically, \sabre\ starts from a random initial mapping and transforms the input circuit $C$ using the heuristic function; it then uses the final mapping as the initial mapping and transforms the reverse of $C$; and finally, it uses the final mapping of the second transformation as the initial mapping and transforms $C$ again. This approach incorporates information about the entire circuit. The \sahs\ algorithm \cite{Zhou+20_SAHS} first selects an initial mapping that best fits the input circuit $C$ using the simulated annealing method and then, during routing, simulates the search process one step further, selecting the SWAP with the best subsequent SWAP to apply. The Monte Carlo Tree Search (MCTS) method for QCT, denoted \mcts, was first introduced in  \cite{ZhouFL20_MCTS_iccad} for SWAP count optimisation and extended in \cite{Zhou+22_MCTS_Todaes} for depth optimisation. The core idea is to explore the search space in a balanced way. On average, MCTS-based QCT algorithms can search deeper and find better solutions.

\section{{\qknob} Circuits} \label{sec:quekno-theory}
{This section describes our method for constructing \qknob\ benchmarking circuits. These circuits are designed to address limitations in existing benchmarks \cite{TanC21-queko} by providing instances with known transformation costs and varying complexities.} Our procedure for constructing  {\qknob} circuits is  the \emph{reverse} of the partition-and-permute circuit transformation procedure described in Theorem~\ref{thm:qct} of Sec.~\ref{sec:qct}. 

\subsection{\qknob\ Circuits with Two Subcircuits}
To illustrate the construction, consider the task of creating a benchmarking circuit with an optimal SWAP cost 1. If a circuit's  interaction graph is not embeddable in $\ag$, its transformation cost is at least 1. We randomly generate subgraphs $G_1,G_2$ of $\ag$, and permute $G_2$ by the permutation $\pi_{i,j}$ for a randomly selected edge $(i,j)$ of $\ag$. 

We then generate $\ag$-circuits $\widetilde{C}_1$ and $\widetilde{C}_2$ with interaction graphs $G_1$ and $G_2$, respectively. Permuting $\widetilde{C}_2$ with $\pi_{i,j}$ yields a new circuit  $\widetilde{C}_1+ \pi_{i,j}(\widetilde{C}_2)$, with an optimal cost of at most 1. To increase the generality of the generated circuit, we apply another permutation $\pi_1$ to $\widetilde{C}_1+ \pi_{i,j}(\widetilde{C}_2)$. The resulting {circuit}, $\pi_1\big(\widetilde{C}_1+ \pi_{i,j}(\widetilde{C}_2)\big)$, is a {\qknob} circuit with an optimal SWAP cost of at most 1. Equality holds when the union of $G_1$ and $\pi_{i,j}(G_2)$ is not embeddable in $\ag$. 

{The general case involves the following notion of a subgraph link (glink), which connects the interaction graphs of two subcircuits.} 
\begin{definition}[Glink]\label{dfn:glink}
Let $\ag=(\V,\E)$ be an architecture graph and $G_i=(V_i,E_i)$ ($i=1,2$) its subgraphs. Suppose $\pi$ is a permutation on $\V$. Define the graph $G'=(V',E')$, where 
\begin{itemize}
    \item $V'=V_1\cup \pi(V_2)$, and
    \item $E' = E_1\cup \pi(E_2)$, that is,  for $u,v\in V'$, $(u,v)\in E'$ if and only if $(u,v)\in E_1$ or $(\pi^{-1}(u),\pi^{-1}(v))$ $\in E_2$.
\end{itemize}  
We call  $\la G_1,\pi,G_2\ra$ a \emph{subgraph link} (\emph{glink} for short), denoted by $\glink{G_1}{\pi}{G_2}$. When $G'$ is not embeddable in $\ag$, we call  $\la G_1,\pi,G_2 \ra$ a \emph{strong glink}, denoted by $\sglink{G_1}{\pi}{G_2}$.
\end{definition}
Fig.~\ref{fig:glink} shows an example of strong glinks.

Starting from a glink $\glink{G_1}{\pi_2}{G_2}$, we construct a circuit as follows: first, randomly generate $\ag$-circuits $\widetilde{C}_i$ for $i=1,2$ such that the $\ag$-subgraph $G_i$ is the interaction graph of $\widetilde{C}_i$; second, apply a randomly generated permutation $\pi_2$ to $\widetilde{C}_2$ and concatenate it with $\widetilde{C}_1$; finally, apply another permutation $\pi_1$ to  $\widetilde{C}_1 + \pi_2(\widetilde{C}_2)$. Let $C=\pi_1(\widetilde{C}_1)+(\pi_1\circ\pi_2)(\widetilde{C}_2)$. Then, $C$ is a \qknob\ circuit constructed from the glink $\glink{G_1}{\pi_2}{G_2}$. 

{We show that $C$ has a built-in transformation. The following lemma guarantees that, before applying $\pi_1$, the circuit can be transformed into an $\ag$-circuit by inserting any SWAP circuit that implements $\pi_2^{-1}$ between $\widetilde{C}_1$ and $\pi_2(\widetilde{C}_2)$.}
\begin{lemma}\label{lem:glink}
Let $\glink{G_1}{\pi_2}{G_2}$ be a glink. Suppose $\widetilde{C}_i$ $(i=1,2)$ is a circuit whose interaction graph is $G_i$. Then $\widetilde{C}_1 + \pi_2(\widetilde{C}_2)$ can be transformed into an $\ag$-executable circuit by inserting between $\widetilde{C}_1$ and $\pi_2(\widetilde{C}_2)$ any SWAP circuit that implements {$\pi_2^{-1}$}. 
\end{lemma}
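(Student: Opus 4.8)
The plan is to show that the circuit $C' \define \widetilde{C}_1 + \la\!\la \pi^{-1} \ra\!\ra + \pi(\widetilde{C}_2)$ is an $\ag$-circuit, i.e., every 2-qubit gate in it acts on a pair of neighbouring qubits in $\ag$. The circuit $C'$ naturally splits into three blocks, so I would verify the $\ag$-executability of each block separately. The SWAP circuit $\la\!\la \pi^{-1} \ra\!\ra$ consists only of SWAP gates on edges of $\ag$ (by the definition of implementability), so it is trivially $\ag$-executable, and I would dispose of this block first.

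Next I would handle the two remaining blocks by tracking the running logical-to-physical mapping. The key observation is that $\widetilde{C}_1$ has interaction graph $G_1$, which is a subgraph of $\ag$; since every edge of $G_1$ is an edge of $\ag$, every 2-qubit gate of $\widetilde{C}_1$ is directly executable under the identity mapping $id_{\V}$. The block $\la\!\la \pi^{-1} \ra\!\ra$ then updates the current mapping from $id_{\V}$ to $\pi^{-1}$ (using the mapping-update convention established before Lemma~\ref{lem:permcomp}, where applying a SWAP circuit implementing $\pi^{-1}$ post-composes the mapping by $\pi^{-1}$). The final and main step is to show that the third block $\pi(\widetilde{C}_2)$ is executable under this mapping $\pi^{-1}$. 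A 2-qubit gate of $\pi(\widetilde{C}_2)$ has form $\la \pi(u), \pi(v) \ra$ where $\la u,v\ra$ is a 2-qubit gate of $\widetilde{C}_2$ and hence $(u,v)\in E_2$. Applying the mapping $\pi^{-1}$ sends this gate to $\la \pi^{-1}(\pi(u)), \pi^{-1}(\pi(v)) \ra = \la u,v\ra$, whose qubits are neighbours in $\ag$ because $G_2$ is a subgraph of $\ag$. Thus the gate is executable.

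I expect the main obstacle to be purely bookkeeping rather than conceptual: one must be careful about the direction of composition and about whether $\pi^{-1}$ or $\pi$ is the permutation implemented by the inserted block. The safest route is to appeal directly to Lemma~\ref{lem:permcomp} and to the mapping-update rule $\sigma \mapsto \pi_{i,j}\circ\sigma$ to confirm that inserting a circuit implementing $\pi^{-1}$ turns the current mapping $id_{\V}$ into $\pi^{-1}$, so that the permuted gates of $\pi(\widetilde{C}_2)$ are un-permuted back onto edges of $G_2\subseteq\ag$. Once the three blocks are each shown $\ag$-executable under the running mapping, concatenating them yields a single $\ag$-executable transformed circuit, and since the inserted SWAPs realise $\pi^{-1}$, this is exactly the claimed transformation. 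I would close by remarking that the argument does not depend on the glink being strong; strength is only needed to guarantee a \emph{nonzero} optimal cost, which is the concern of the subsequent theorem rather than of this lemma.
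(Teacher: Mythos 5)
Your proposal is correct and follows essentially the same route as the paper's proof: start from the identity mapping (under which $\widetilde{C}_1$ is executable since $G_1\subseteq\ag$), note that the inserted SWAP circuit turns the current logical-to-physical mapping into $\pi^{-1}$, and observe that $\pi^{-1}(\pi(\widetilde{C}_2))=\widetilde{C}_2$ is an $\ag$-circuit. Your closing remark that strength of the glink is irrelevant here and only matters for the lower-bound side is also accurate.
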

{Furthermore, because $C$ is obtained by applying $\pi_1$ on $\widetilde{C}_1 + \pi_2(\widetilde{C}_2)$, it can be transformed by first applying the initial mapping $\sigma_1\define{\pi_1}^{-1}$  and then inserting any SWAP circuit implementing $\pi_2^{-1}$  between $\widetilde{C}_1$ and $\pi_2(\widetilde{C}_2)$.}

{As a concrete example, we show that the circuit in Fig.~\ref{fig:quantum_circuit} (and Sec.~\ref{sec:qct-ex}) can be reconstructed as a \qknob\ circuit.}

\noindent\textbf{The Circuit in Fig.~\ref{fig:quantum_circuit} as a \qknob\ Circuit:}
Let $V_1 \define\{0,1,2,3,4,5\}$, $E_1\define \{(0,1), (1,3), (2,4), (3,5)\}$ and $V_2\define \{1,2,3,4,5\}$, $E_2\define \{ (1,3), (2,4), (4,5)\}$. Then $G_i=(V_i,E_i)$ $(i=1,2)$ are two subgraphs of $\ag=Grid(3,2)$, see Fig.~\ref{fig:glink} for illustration. 

Let $\pi_2\define \pi_{0,1}$ be the permutation induced by swapping 0 and 1 in $\ag$. Permuting $G_2$ with $\pi_2$, and letting $G'$ be the union of $G_1$ and $\pi_2(G_2)$, $G'$ has edges $(0,1),(1,3), (2,4),(3,5), (0,3)$, and $(4,5)$ (see Fig.~\ref{fig:glink}(d)). Because $G'$ contains a 3-cycle $(0,1,3,0)$, it cannot be embedded in $\ag$. That is, we have a strong glink $\sglink{G_1}{\pi_2}{G_2}$.

\begin{figure}[tbp]
    \begin{tabular}{cccc}
    \includegraphics[width=0.1\textwidth]{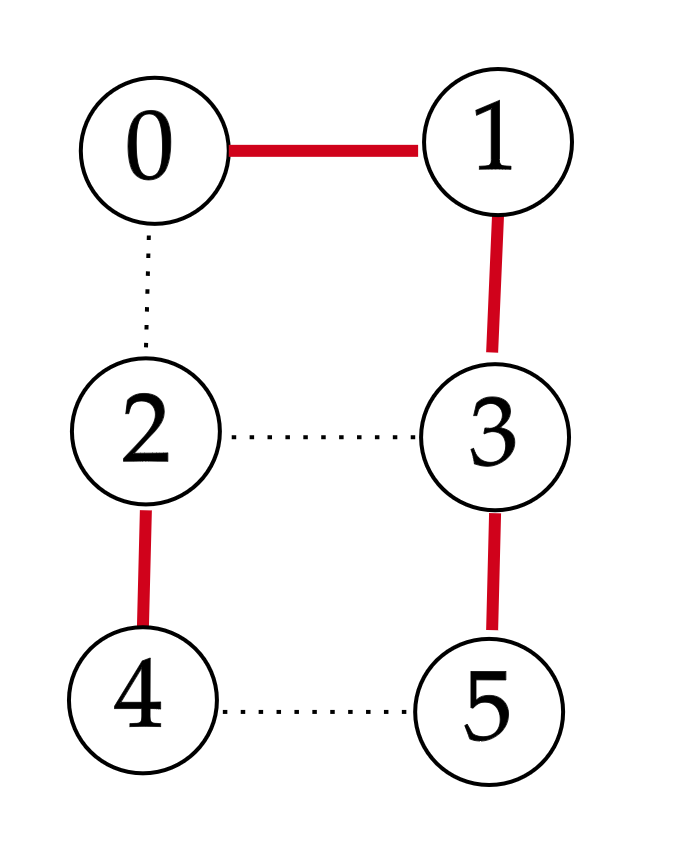} 
    &\hspace*{-2mm}
    \includegraphics[width=0.1\textwidth]{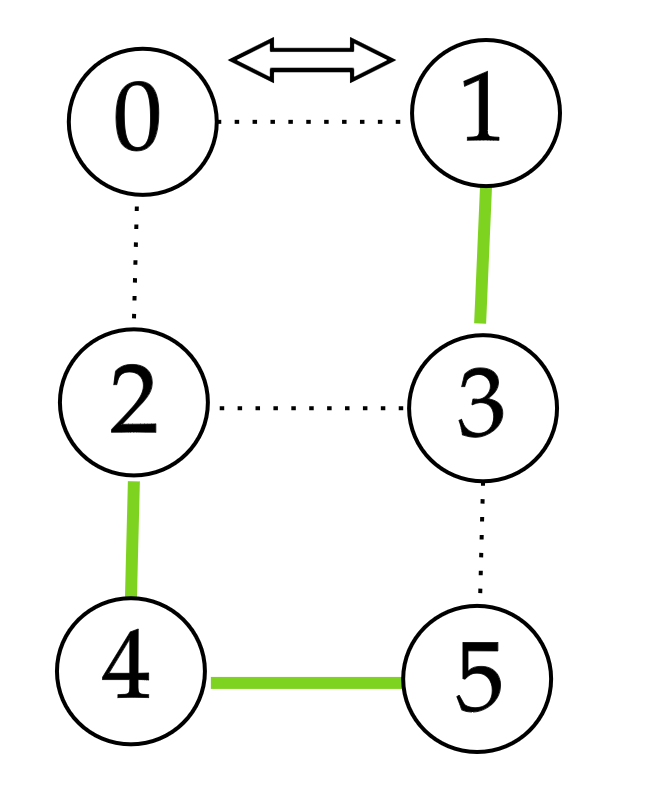} 
    &\hspace*{-2mm}
    \includegraphics[width=0.09\textwidth]{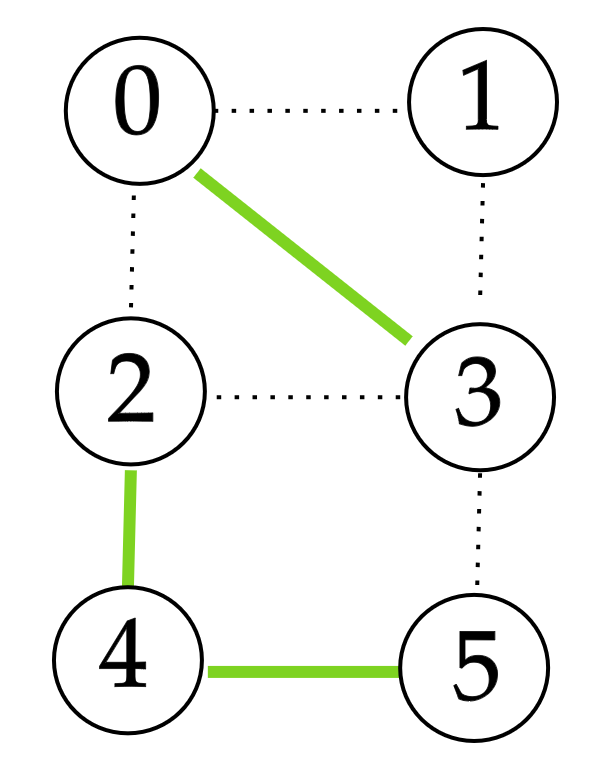} &\hspace*{-2mm}
    \includegraphics[width=0.09\textwidth]{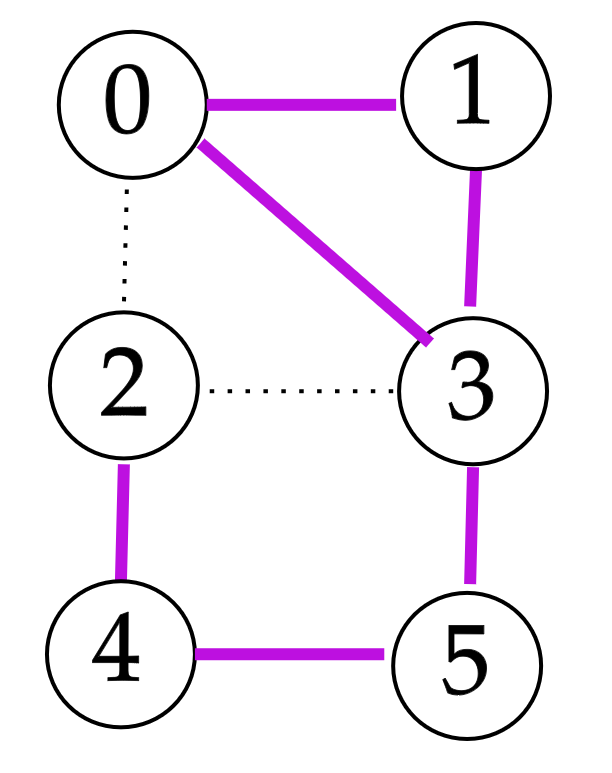} \\
    (a) & (b) & (c) & (d)
    \end{tabular}
    \caption{A strong glink $\sglink{G_1}{\pi_2}{G_2}$: (a) subgraph $G_1$; (b) subgraph $G_2$; (c) the graph $\pi_2(G_2)$ obtained by permuting $G_2$ with $\pi_2=\pi_{0,1}$; 
    (d) the union of $G_1$ and $\pi_2(G_2)$.}
    \label{fig:glink}
\end{figure}

Starting from $G_1$ and $G_2$, we construct two circuits, $\widetilde{C}_1$ and $\widetilde{C}_2$, whose interaction graphs are $G_1$ and $G_2$, respectively. For instance, let 
\begin{align*}
\resizebox{1\hsize}{!}{%
$\widetilde{C}_1 =  \Big[\la 2\ra , \la 0, 1\ra , \la 1\ra , \la 3, 5\ra , \la 3\ra , \la 1\ra , \la 3, 5\ra , \la 4\ra , \la 2\ra , \la 2, 4\ra , \la 1\ra , \la 5\ra $, $\la 1, 3\ra \Big]$
}
\\
\resizebox{1\hsize}{!}{%
$\widetilde{C}_2 = \Big[\la 0\ra , \la 1\ra , \la 2\ra , \la 3, 1\ra , \la 5\ra , \la 1, 3\ra , \la 1\ra , \la 1\ra , \la 4, 5\ra $, $\la 4\ra , \la 0\ra , \la 2, 4\ra \Big]$. 
}
\end{align*}
Then, $\widetilde{C}_1$ and $\widetilde{C}_2$ are two $\ag$-circuits. 

Permuting $\widetilde{C}_2$ with $\pi_2=\pi_{0,1}$ yields 
\begin{align*}
\resizebox{1\hsize}{!}{%
$\pi_2(\widetilde{C}_2) = \Big[\la 1\ra , \la 0\ra , \la 2\ra , \la 3, 0\ra , \la 5\ra , \la 0, 3\ra , \la 0\ra , \la 0\ra , \la 4, 5\ra , \la 4\ra , \la 1\ra , \la 2, 4\ra \Big].$
}
\end{align*}
Finally, we generate an arbitrary permutation $\pi_1$ and apply it to $\widetilde{C}_1+\pi_2(\widetilde{C}_2)$. For example, let $\pi_1=(2,1,5,4,3,0)$. We have a {\qknob} circuit \[C \define \pi_1\big(\widetilde{C}_1+\pi_2(\widetilde{C}_2)\big) = \pi_1(\widetilde{C}_1)+({\pi_1\circ\pi_2})(\widetilde{C}_2).\] Let $C_1 =\pi_1(\widetilde{C}_1)$ and $C_2=({\pi_1\circ\pi_2})(\widetilde{C}_2)$. Then  $C=C_1+C_2$.

The \qknob\ circuit $C$ is exactly the logical circuit shown in Fig.~\ref{fig:quantum_circuit} and studied in Sec.~\ref{sec:qct-ex}. As shown in Sec.~\ref{sec:qct-ex}, starting from the initial mapping $\sigma_1\define\pi_1^{-1} = (5,1,0,4,3,2)$, all gates in $C_1$ are executable and removed. After inserting a $\swap{0}{1}$, the mapping $\sigma_1=\pi_1^{-1}$ evolves to $\sigma_2\define ({\pi_1\circ\pi_2})^{-1} = \pi_2^{-1}\circ \pi_1^{-1} = \pi_2^{-1}\circ \sigma_1 = (5,0,1,4,3,2)$, which transforms {$\la 4,2\ra$ to $\la 3,1\ra$, $\la 3,0\ra$ to $\la 4,5\ra$, and $\la 5,3\ra$ to $\la 2,4\ra$}. All gates in $C_2$ are now executable.

\subsection{General \qknob\ Circuit Construction}

\begin{figure}[tbp]
    \centering
    \includegraphics[width=0.48\textwidth]{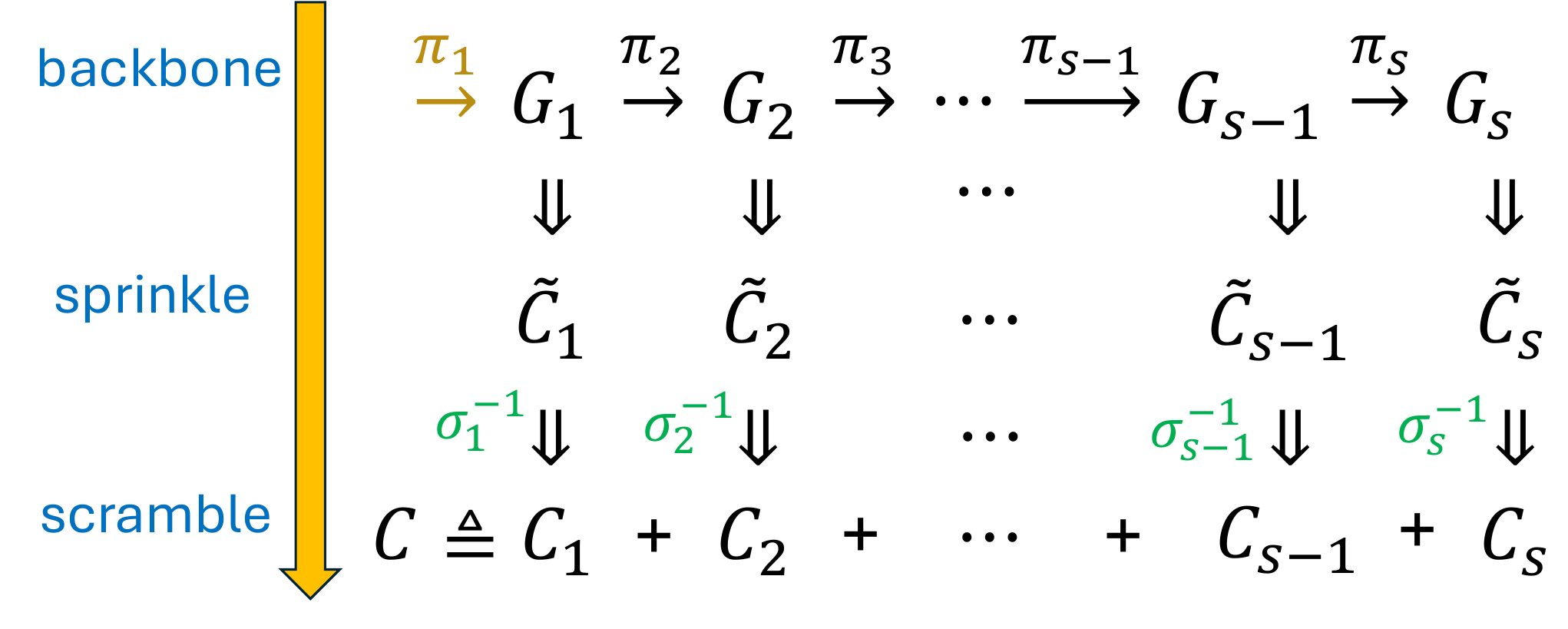}
    \caption{Construction of a \qknob\ circuit $C$ on an  architecture graph $\ag$, where  $\pi_i$, $G_i$, $\widetilde{C_i}$, $C_i$  represent, respectively, a permutation on $\ag$, a subgraph of $\ag$, an $\ag$-circuit, and a subcircuit of $C$. Additionally,     $\sigma_{i}^{-1}=\pi_1\circ\cdots\circ\pi_{i}$. The built-in transformation of $C$ has initial mapping $\sigma_1=\pi^{-1}_1$  and sequentially inserts SWAP circuits $\sem{\pi_i^{-1}}$ ($1\leq i\leq s$), where $\sem{\pi_i^{-1}}$ denotes any optimal SWAP circuit  implementing $\pi_i^{-1}$.}
    \label{fig:qknob}
\end{figure}

{To construct general \qknob\ circuits, we generate multiple subcircuits and connect them using glinks. This process relies on the notion of a glink chain, which is a sequence of glinks defining the connections between subcircuits and determining the cost of the built-in transformation for the resulting circuit.}
\begin{definition}[Glink Chain]\label{dfn:glinkchain}
A glink chain is a sequence 
\[G_1,\pi_2,G_2,\pi_3,\ldots, G_{s-1},\pi_{s},G_s\] 
such that $\glink{G_i}{\pi_{i+1}}{G_{i+1}}$ for each $1\leq i<s$. A glink chain is \emph{strong} if all its glinks are strong. 
\end{definition}

Given a glink chain $\la G_1,\pi_2,G_2,\ldots, \pi_{s},G_s \ra$ and a permutation $\pi_1$, we generate a \qknob\ circuit $C$ using a 3-phase construction process similar to \queko\ \cite{TanC21-queko} (see Fig.~\ref{fig:qknob}): 

1) For each $i$, generate a random circuit $\widetilde{C}_i$ such that its interaction graph is $G_i$.

2) Concatenate these circuits backward with the permutations: 
    \begin{align}\label{eq:quekno}
   \resizebox{0.9\hsize}{!}{%
    $C = \pi_1\bigg(\widetilde{C}_1 + \pi_2 \Big(\widetilde{C}_2 + \pi_3 \big( \cdots \pi_{s-1}(\widetilde{C}_{s-1} + \pi_{s}(\widetilde{C}_s))\cdots\big)\Big)\bigg)$.
    }
\end{align}
Note that each $\widetilde{C}_i$ constructed above is an $\ag$-circuit. 

The following theorem guarantees that $C$ has a built-in transformation with cost $\sum_{i=2}^s\swapnorm{\pi_{i}}$, which provides an upper bound for $C$'s optimal transformation cost. 

\begin{theorem}\label{thm:quekno}
Suppose $C$ is a \qknob\ circuit  as in Eq.~\ref{eq:quekno}. Then $C$ can be transformed into an $\ag$-executable circuit by: 
\begin{itemize}
    \item taking $\pi_1^{-1}$ as the initial mapping,
    \item inserting, from left to right, a SWAP circuit {$S_{i}$} that implements {$\pi_{i+1}^{-1}$} after (logically) executing and removing gates in $C_i\define  (\pi_1\circ\cdots\circ \pi_i)(\widetilde{C}_i)$ from $C$ for $1\leq i<s$.
\end{itemize}
This built-in transformation has cost of $\sum_{i=2}^s\swapnorm{\pi_{i}}$.
\end{theorem}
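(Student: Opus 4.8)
**

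The plan is to prove the theorem by reversing the transformation process analysed in Theorem~\ref{thm:qct}, showing that the explicit initial mapping and SWAP-circuit insertions described in the statement yield a valid $\ag$-executable circuit, and then bounding the total SWAP count. First I would unfold the nested definition of $C$ in Eq.~\ref{eq:quekno} into the flat concatenation $C = C_1 + C_2 + \cdots + C_s$, where $C_i \define (\pi_1\circ\cdots\circ\pi_i)(\widetilde{C}_i)$. This requires repeated use of the distributivity lemma $\pi(C_1+C_2) = \pi(C_1)+\pi(C_2)$ together with the composition rule for permuted circuits (which follows from Definition~\ref{dfn:pcirc}), peeling the permutations $\pi_1,\pi_2,\ldots$ outward one at a time so that the $i$-th summand accumulates the prefix composition $\pi_1\circ\cdots\circ\pi_i$.

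Next I would set $\sigma_i \define (\pi_1\circ\cdots\circ\pi_i)^{-1} = \pi_i^{-1}\circ\cdots\circ\pi_1^{-1}$ and verify that each section $C_i$ is executable under $\sigma_i$, i.e.\ that $\sigma_i(C_i) = \widetilde{C}_i$ is an $\ag$-circuit; this holds because each $\widetilde{C}_i$ was constructed with interaction graph $G_i\subseteq\ag$, hence is an $\ag$-circuit by construction. With the sections and mappings identified, I would invoke the general transformation scheme of Section~\ref{sec:qct} (culminating in Eq.~\ref{eq:physical-circuit-final}): taking $\sigma_1=\pi_1^{-1}$ as the initial mapping and inserting between $C_i$ and $C_{i+1}$ a SWAP circuit implementing $\sigma_{i+1}\circ\sigma_i^{-1}$. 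A short computation gives $\sigma_{i+1}\circ\sigma_i^{-1} = \pi_{i+1}^{-1}\circ\cdots\circ\pi_1^{-1}\circ\pi_1\circ\cdots\circ\pi_i = \pi_{i+1}^{-1}$, so the inserted circuit is exactly $S_i = \la\!\la\pi_{i+1}^{-1}\ra\!\ra$, matching the statement. This establishes that $C$ is successfully transformed into an $\ag$-executable circuit in the prescribed way.

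Finally, for the cost bound I would observe that each $S_i$ uses $\swapnorm{\pi_{i+1}^{-1}}$ SWAPs, and that $\swapnorm{\pi_{i+1}^{-1}} = \swapnorm{\pi_{i+1}}$ since reversing a SWAP sequence implements the inverse permutation with the same number of gates. Summing over $1\leq i<s$ gives a transformation of cost $\sum_{i=2}^s\swapnorm{\pi_i}$; since the \emph{optimal} cost is by definition no larger than the cost of any particular transformation, the optimal transformation cost of $C$ is at most $\sum_{i=2}^s\swapnorm{\pi_i}$, as claimed.

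I expect the main obstacle to be the bookkeeping in the first step: correctly flattening the deeply nested permuted concatenation into $C_1+\cdots+C_s$ with the right accumulated prefix permutations, and being careful that composition of permuted circuits is \emph{contravariant} in the expected sense so that the prefixes compose in the correct order. The executability check and the $\swapnorm{\pi_{i+1}^{-1}}=\swapnorm{\pi_{i+1}}$ identity are routine; the only genuine subtlety is that the theorem asserts an upper bound on the \emph{optimal} cost rather than an exact value (consistent with the paper's earlier remark that multi-swap permutations may give a less precise estimate), so no lower-bound argument is needed here.
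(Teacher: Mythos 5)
Your proposal is correct and follows essentially the same route as the paper's proof: defining $\sigma_i = (\pi_1\circ\cdots\circ\pi_i)^{-1}$, recognising $C$ as an instance of the partition in Eq.~\ref{eq:circpart}, applying the general transformation scheme of Sec.~\ref{sec:qct}, and bounding the optimal cost by the cost $\sum_{i=2}^s\swapnorm{\pi_i^{-1}}=\sum_{i=2}^s\swapnorm{\pi_i}$ of this particular transformation. You merely spell out the flattening of the nested concatenation and the cancellation $\sigma_{i+1}\circ\sigma_i^{-1}=\pi_{i+1}^{-1}$ more explicitly than the paper does.
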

{The built-in transformation in Theorem~\ref{thm:quekno} corresponds precisely to the partition-and-permute form stated in Theorem~\ref{thm:qct}.} Indeed, letting $\sigma_1\define\pi_1^{-1}$ and $\sigma_{i+1}\define \pi_{i+1}^{-1}\circ \sigma_i$ for $1\leq i<s-1$ (cf. Fig.~\ref{fig:qknob}), we see that $\pi_{i+1}^{-1}=\sigma_{i+1}\circ \sigma_i^{-1}$, $\widetilde{C}_i = \sigma_i(C_i)$, and these  $C_i,\sigma_i$ satisfy the conditions given in Theorem~\ref{thm:qct}. 

There is a potential issue with using arbitrary permutation to scramble an $\ag$-circuit: computing the exact value of $\swapnorm{\pi}$ can be difficult. In our benchmark construction for SWAP count optimality, we require each of $\pi_2,\ldots,\pi_s$ to be derived from a small number of SWAPs. The first permutation, $\pi_1$, can still be arbitrary because it does not contribute to the transformation cost. To ensure that the transformation cost is close to the optimal one, we further require that the union of $G_i$ and $\pi_{i+1}(G_{i+1})$ is not embeddable in $\ag$ for each $1\leq i<s$. This implies that at least one SWAP must be inserted between two consecutive subcircuits $C_i$ and $C_{i+1}$. With these restrictions, the \qknob\ circuit $C$'s built-in SWAP cost is close to the optimal cost, although the exact cost generally remains difficult to compute.

While the above discussion focuses on benchmarks for evaluating SWAP count optimality, our approach can also generate benchmarks for evaluating depth optimality. For this purpose, we replace the permutations used above (except the first, which can still be arbitrary) with permutations implementable by a set of SWAP gates that do not share any qubits; that is, a SWAP circuit of depth 1. These SWAPs are often called \emph{parallel} SWAPs.

\section{Benchmark Design}\label{sec:design}

This section describes the detailed procedure for generating \qknob\ benchmarking circuits for $\ag=(\V,\E)$, the architecture graph of a quantum device.

\subsection{\qknob\ Construction Algorithm}

\begin{algorithm}
\caption{{\qknob} Circuits Construction}
\label{alg:quekno}
\small
\begin{algorithmic}[1]
\Require{An architecture graph $\ag$,  optimisation type $opt_\text{type}$, target cost $c$, permutation type $perm_\text{type}$, subgraph size $graph\_size$, and qubit gate ratio $\rho_\text{qbg}$}
\Ensure{A \qknob\ circuit $C$} 
\State Randomly generate a permutation $\pi_1$
\State Randomly generate a subgraph $G_1$ of $\ag$ respecting subgraph size
\State Randomly generate an $\ag$-circuit $\widetilde{C}_1$ with interaction graph $G_1$ respecting qubit gate ratio
\State $glinkChain \leftarrow G_1$
\State $cost \leftarrow 0$
\State $\ell \leftarrow 1$
\While{$cost < c$}
    \State Randomly generate a strong glink $\sglink{G_\ell}{\pi_{\ell+1}}{G_{\ell+1}}$ starting from the last subgraph $G_\ell$ of $glinkChain$  that respects  permutation type  and subgraph size
    \State Extend $glinkChain$ with $\sglink{G_\ell}{\pi_{\ell+1}}{G_{\ell+1}}$ 
    \State Randomly generate an $\ag$-circuit $\widetilde{C}_{\ell+1}$ with interaction graph $G_{\ell+1}$  respecting qubit gate ratio
    \If{$opt_\text{type}  = \text{`SWAP count'}$}
        \State $cost \leftarrow cost+\swapnorm{\pi_{\ell+1}}$ 
    \ElsIf{$opt_\text{type}  = \text{`depth'}$}
        \State $cost \leftarrow cost + 1$
    \EndIf
\State $\ell \leftarrow \ell + 1$
\EndWhile
\State
\resizebox{0.85\hsize}{!}{%
$C\leftarrow  \pi_1\bigg(\widetilde{C}_1 + \pi_2 \Big(\widetilde{C}_2 + \pi_3 \big( \cdots \pi_{\ell}(\widetilde{C}_{\ell} + \pi_{\ell+1}(\widetilde{C}_{\ell+1}))\cdots \big)\Big)\bigg)$ 
}
\end{algorithmic}
\end{algorithm}

We require the following subroutines (the parameters used there will be explained in detail in Sec.~\ref{sec:quekno-dim}):

\noindent (1) Randomly generate a subgraph $G$ of $\ag$ with the specified subgraph size (either `small' or `large').

\noindent (2) Given a subgraph $G$ of $\ag$, and a given qubit gate ratio $\rho_\text{qbg}$ (specifying the ratio between the numbers of one- and two-qubit gates), randomly generate an  $\ag$-circuit ($\widetilde{C}$) that respects $\rho_\text{qbg}$ and has $G$ as its interaction graph. 
    
\noindent (3) Given two subgraphs $G_1, G_2$ of $\ag$, randomly select a permutation $\pi$ of a given type such that $\sglink{G_1}{\pi}{G_2}$ (i.e., $\la G_1,\pi,G_2 \ra$ is a strong glink).  When the optimisation objective is `SWAP count', we have two permutation types, `opt1' and `opt2' {to be introduced in Sec.~\ref{sec:quekno-dim}). When the optimisation objective is `depth', the permutation is implemented by parallel SWAPs.

Using these subroutines, we can generate benchmarking circuits of the form shown in Eq.~\ref{eq:quekno}, whose built-in transformation costs are near-optimal. The pseudocode is described in Alg.~\ref{alg:quekno}. 

\vspace*{2mm}
From Theorem~\ref{thm:qct}, any transformation of an input circuit can be presented as in Eq.~\ref{eq:quekno}. This suggests that our benchmarks are general and representative. 

By Theorem~\ref{thm:quekno}, each circuit constructed using  Alg.~\ref{alg:quekno} has a built-in transformation. When the optimisation objective is  `SWAP count', we implement each permutation (other than $\pi_1$) using one or two SWAPs. Because we use strong glinks, at least one SWAP must be inserted between any two consecutive subcircuits. This implies that the built-in cost should be close to the optimal cost and thus provides us a known near-optimal transformation cost. When the optimisation objective is `depth', the situation is more complex. While each SWAP gate is implemented with three consecutive CNOTs, the depth increase from inserting a depth-1 SWAP circuit between $C_i$ and $C_{i+1}$ can be significantly greater than 3. Indeed, a single SWAP insertion can double the circuit depth \cite{li-iccad23sqgm}!  
To generate benchmarking circuits with a small depth ratio, we modify each $\widetilde{C}_i$ so that there are few or no free qubits in its last layer. This can be achieved by, for example, rearranging the gates in $\widetilde{C}_i$ or randomly inserting  one-qubit gates and some two-qubit gates that have already appeared.

\subsection{Dimensions and Parameters of \qknob} \label{sec:quekno-dim}

Our design has the following dimensions:

\noindent \textbf{Optimisation objective ($opt_\text{type}$)}: The target can be minimising the SWAP cost or the depth cost. SWAP cost counts the number of inserted SWAPs, whereas depth cost is the difference between the output and input circuit depths.

\noindent \textbf{Target transformation cost ($c$)}: If the optimisation objective is SWAP cost, we select $c$ from $\set{0,1,2,3,4,5,10,15,20,25}$; if the objective is depth cost, we select the cost (in swap layers) from $\set{1,2,3,4,5,10}$. We focus on near-term feasible circuits \cite{TanC21-queko} with depths usually no more than $80$. Because the circuit depth can increase by three or more when adding a layer or a sequence of SWAPs, considering SWAP cost below 25 and SWAP layers below 10 is reasonable. 
    
\noindent \textbf{Permutation type ($perm_\text{type})$}: For SWAP count optimisation, permutations (excluding the first, which does not contribute to transformation cost) are implemented by either a single SWAP or two consecutive SWAPs. If the permutation type is `opt1', each permutation is implemented by a single swap; if it is `opt2', each permutation is randomly implemented by either a single SWAP or two consecutive SWAPs. For depth optimisation, we implement each permutation (except the first) by a set of parallel SWAPs (i.e., a depth-1 SWAP circuit).

\noindent \textbf{Architecture graph ($\ag$)}: Our benchmark construction method can be applied on any near-term superconducting quantum devices. In this paper, we consider three representative quantum devices: IBM Q Tokyo (20 qubits), IBM Q Rochester (53 qubits), and Google's Sycamore (53 qubits), also  considered in \cite{TanC21-queko,LiZF21_fidls,Zhou+22_MCTS_Todaes}. Benchmarks designed for Rochester may not be near-optimal for Sycamore, and vice versa, because (i) some Sycamore subgraphs are not embeddable in Rochester, and (ii) a strong glink for Rochester might not be a strong glink for Sycamore. 
    
\noindent \textbf{Subgraph size ($graph\_size$)}: Subgraphs of the architecture graph are used for generating glinks. Larger subgraphs result in more gates in the subcircuits ($C_i$ in Alg.~\ref{alg:quekno}) of the benchmarking circuit. For the 20-qubit IBM Q Tokyo, the subgraphs in the generated glinks have approximately five edges  on average. For the  two 53-qubit devices, we offer two choices: small subgraphs (approximately eight edges) or large  subgraphs (approximately 16 edges).
    
\noindent \textbf{Qubit gate ratio ($\rho_\text{qbg}$) for evaluating depth optimality}: 
For depth optimality, the number (and distribution) of one-qubit gates in the input circuit can significantly affect the transformed circuit's depth. To reflect this, we introduce the `qubit gate ratio' parameter, $\rho_\text{qbg}=M_1/M_2$, in \qknob\ construction, where $M_1$ and $M_2$ denote the number of one- and two-qubit gates, respectively. Following \cite{TanC21-queko}, we consider two  ratios: the `QSE' ratio ($\rho_\text{qbg}= 2.55$) based on the random circuit used in Google's quantum supremacy experiment \cite{Arute+19_google_quantum_supremacy}, and the `TFL' ratio ($\rho_\text{qbg} = 1.5$) based on the Toffoli circuit.

\begin{table*}[]
    \caption{\qknob\  (top) and \queko\ benchmark sets (bottom)}
    \label{tab:benchmark_set}
    \centering
    \begin{tabular}{c|c|c|c|c|c}
    & benchmark set name & $perm_\text{type}$ & $graph\_size$ & $\rho_\text{qbg}$ & \#circuit \\ \hline 
    \multirow{6}{*}{\qknob} 
    &20Q\_gate\_tokyo& opt1 or opt2 &large &1.5& 100 $(\times 2)$\\
    &53Q\_gate\_Sycamore & opt1 or opt2  &small or large&1.5&100 $(\times 4)$\\
    &53Q\_gate\_Rochester & opt1 or opt2 &small or large&1.5&100 $(\times 4)$\\
    &20Q\_depth\_tokyo&parallel &large &1.5 or 2.55 & 60 $(\times 2)$\\
    &53Q\_depth\_Sycamore &parallel&small or large &1.5 or 2.55 & 60 $(\times 4)$\\
    &53Q\_depth\_Rochester &parallel&small or large &1.5 or 2.55 & 60 $(\times 4)$\\
 \hline
 \multirow{3}{*}{\queko}
&20Q\_bss\_tokyo &-&-&2.55& 90\\
&16Q\_bntf\_Aspen-4 &-&-&1.5 & 90\\
&54Q\_bntf\_Sycamore  &-&-&2.55& 90\\
    \end{tabular}
\end{table*}

\vspace*{2mm}
For each legal combination of these dimensions, we randomly generate 10 circuits.  In total, we have six sets of \qknob\ circuits, as shown at the top of Table~\ref{tab:benchmark_set}. For example, the circuit set `53Q\_gate\_Rochester' includes 100 circuits for each  parameter pair (permutation type, graph size), which can be used to evaluate the SWAP count optimality of QCT algorithms on IBM Q Rochester.

\section{Experiments and Evaluation}\label{sec:evaluation}
{This section evaluates the effectiveness of QKNOB circuits for benchmarking QCT algorithms. We compare on QKNOB and QUEKO benchmarks the performance of five state-of-the-art QCT algorithms: the transpiler of {\tket} \cite{Cowtan+19-tket} from Quantinuum, \sabre\ \cite{Li+19-sabre}, \sahs\ \cite{Zhou+20_SAHS}, \mcts\ \cite{Zhou+22_MCTS_Todaes}, and the Qiskit transpiler \texttt{StochasticSWAP}, focusing on SWAP count and depth optimality. Our experiments also highlight general trends across different architectures and optimisation objectives.} All our experiments were run on a laptop with i7-11800 CPU, 32 GB memory and RTX 3060 GPU.

\subsection{Details of the Compared QCT Algorithms}  
Qiskit provides multiple choices for both initial mapping and routing procedures.\footnote{The version of Qiskit  used in our evaluation is 0.33.0.} To facilitate comparison with \queko\ benchmarks, we choose \texttt{DenseLayout} and \texttt{StochasticSWAP} as the Qiskit transpiler to compare. \sabre\ was initially described in \cite{Li+19-sabre} and has recently been assembled in Qiskit. In this paper, we choose this Qiskit implementation of \sabre\ and select the advanced `lookahead' heuristics. 
The version of \sahs\ we use is from GitHub.\footnote{https://github.com/BensonZhou1991/circuittransform} The original \mcts\ algorithm \cite{ZhouFL20_MCTS_iccad} targeted SWAP count and was modified in \cite{Zhou+22_MCTS_Todaes} to address depth optimisation.\footnote{https://github.com/BensonZhou1991/MCTS-New} We call these two versions \mcts-size and \mcts-depth, respectively. In our evaluation of SWAP count optimality, we use \mcts-size; otherwise, we use \mcts-depth. The initial mappings used for \mcts\ are obtained by the Simulated Annealing method in \sahs\ \cite{Zhou+20_SAHS}. When evaluating \tket\ on \queko, Tan and Cong \cite{TanC21-queko} selected \texttt{GraphPlacement} for initial mapping, which might have led to favourable results for \tket, as \texttt{GraphPlacement} uses subgraph isomorphism and can find optimal solutions for some \queko\ circuits.  To facilitate comparison with \cite{TanC21-queko}, in our evaluation, we also choose to use \texttt{GraphPlacement}.\footnote{The version of \tket\ used for our evaluation is 0.17.0.}
 In addition, we disable all optimisation passes for fair comparisons. 
 
 As \mcts, \sabre, and \texttt{StochasticSWAP} are random algorithms, for each  circuit, we ran these algorithms five times and recorded the best value. This could improve the performance by 10\% - 20\%.

\subsection{Measuring SWAP Count and Depth Optimalities}
This work focuses on two key metrics for evaluating quantum circuit transformation: SWAP count optimality and depth optimality.

\textbf{SWAP Count Optimality} is measured using the CNOT gate ratio (`\textbf{cx ratio}'), which is the ratio of the number of CNOT gates in the transformed (output) circuit to the number of CNOT gates in the original (input) circuit. Formally, the CNOT gate ratio is defined as:
\begin{equation}\label{eq:rho_gate}
\rho_{\text{cx}} = \frac{\text{number of CNOT gates in the output circuit}}{\text{number of CNOT gates in the input circuit}}.
\end{equation}
Here, we assume that each inserted SWAP gate is decomposed into three consecutive CNOT gates in the output circuit. 

A smaller value of $\rho_{\text{cx}}$ indicates a transformation requires fewer SWAP insertions. Note that $\rho_{\text{cx}}\geq 1$, and equality occurs when no SWAP is inserted during the transformation.

\textbf{Depth optimality} is measured using the \textbf{depth ratio} (denoted as $\rho_{\text{depth}}$), which is the ratio of the circuit depth after transformation to the depth of the original circuit. Formally, he depth ratio is defined as: 
\begin{equation}\label{eq:rho_depth}
    \rho_{\text{depth}} = \frac{\text{depth of the output circuit}}{\text{depth of the input circuit}}.
\end{equation}
Similarly, the smaller the value of $\rho_{\text{depth}}$, the closer the transformation is to optimal in terms of depth. Note that $\rho_{\text{depth}}\geq 1$, with equality achieved when the transformed circuit has the same depth as the original. 


{These two metrics collectively provide a comprehensive evaluation of quantum circuit transformations, capturing both gate-level efficiency and overall circuit execution complexity. We also use $\rho_{\text{cx}}-1$ and $\rho_{\text{depth}}-1$ to measure the gate overhead per CNOT gate and the depth overhead per layer, respectively, relative to the input circuit.
}

\subsection{Summary of Evaluation Results}

\begin{figure}[htb]
    \centering
    \begin{tabular}{c}
    \includegraphics[width=0.48\textwidth]{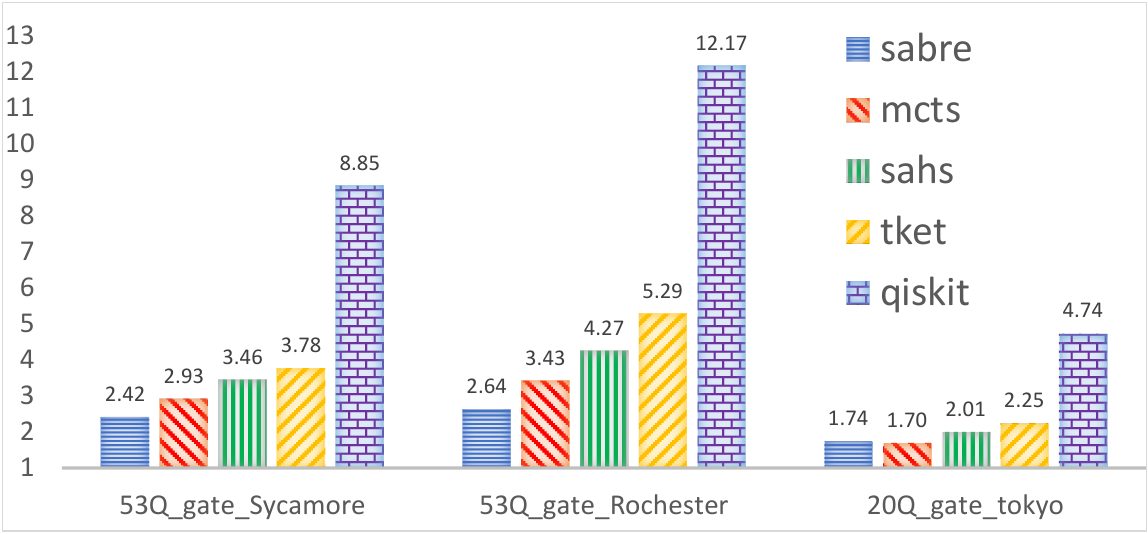}\\
    \includegraphics[width=0.48\textwidth]{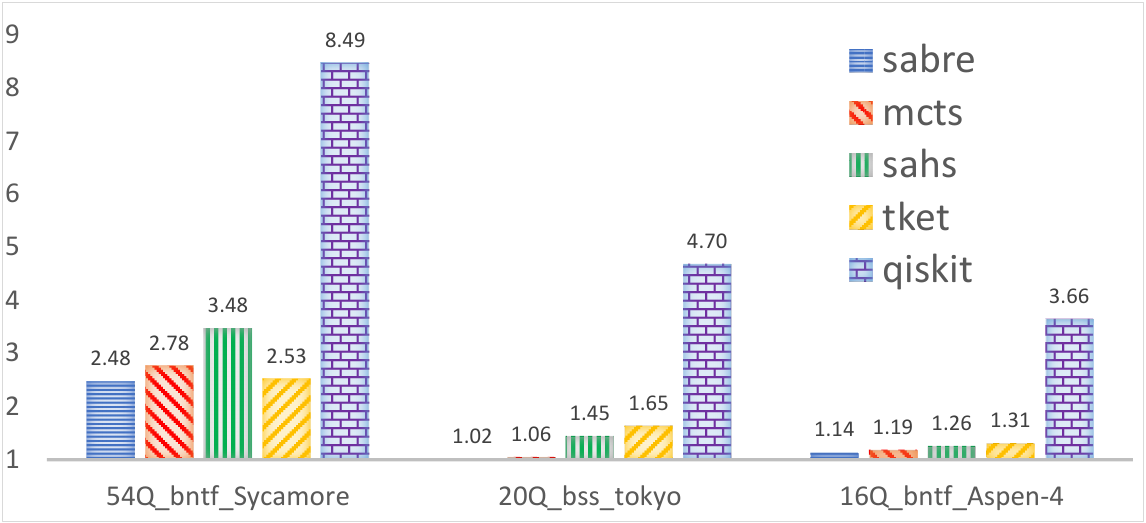}
    \end{tabular}
    \caption{SWAP count optimality performance of QCT algorithms on the \qknob\ (top) and \queko\ (bottom) benchmark sets, where the $y$-axes denote the average cx ratios $\rho_{\text{cx}}$ (cf. Eq.~\ref{eq:rho_gate}); lower values are better. }
    \label{fig:gate_opt_sum}
\end{figure}

\begin{figure}[htb]
    \centering
    \begin{tabular}{c}
    \includegraphics[width=0.48\textwidth]{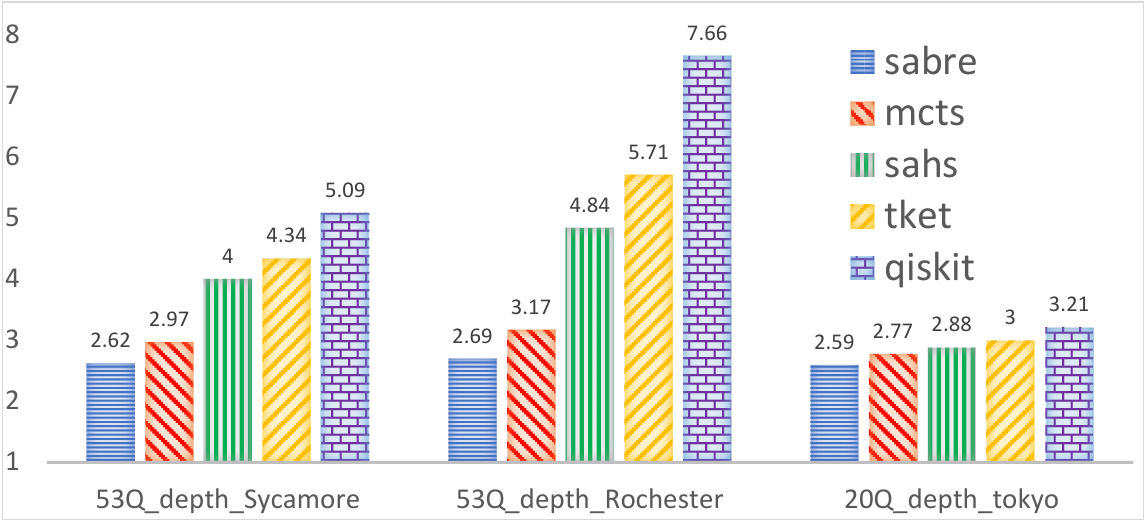}\\
    \includegraphics[width=0.48\textwidth]{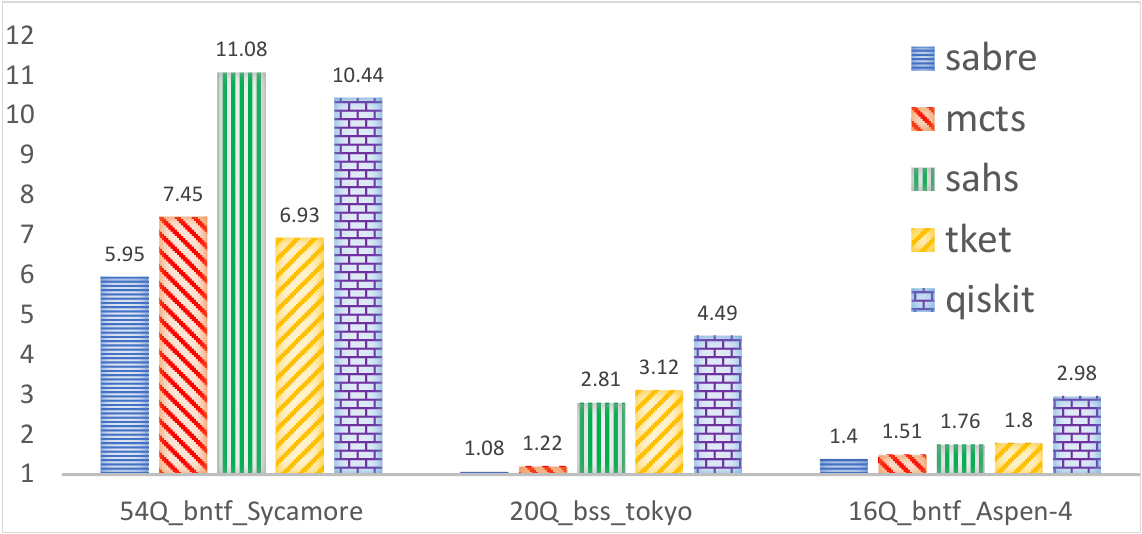}
    \end{tabular}
    \caption{Depth optimality performance of QCT algorithms on the \qknob\ (top) and \queko\ (bottom) benchmark sets, where the $y$-axes denote the average depth ratios $\rho_{\text{depth}}$ (cf. Eq.~\ref{eq:rho_depth}); lower values are better. }
    \label{fig:depth_opt_sum}
\end{figure}


\textbf{For SWAP count optimality}, we consider three \qknob\ benchmark sets: `53Q\_gate\_Sycamore', `53Q\_gate\_Rochester', `20Q\_gate\_Tokyo', and three \queko\ sets `54Q\_bntf\_Sycamore', `20Q\_bss\_Tokyo', and `16Q\_bntf\_Aspen-4'. As no `20Q\_bntf\_Tokyo' benchmark set is provided in the GitHub website of \queko,\footnote{https://github.com/tbcdebug/QUEKO-benchmark} we replace it with `20Q\_bss\_Tokyo', which are benchmarks with depth from 100 to 900 for scaling study. We also run benchmarks in the  `16Q\_bntf\_Aspen-4' on the 20-qubit IBM Q Tokyo. In addition, when running on benchmarks in `54Q\_bntf\_Sycamore', we use the ideal architecture graph of Sycamore where the bad node, as well as its connections, is restored. The evaluation results are summarised in Fig.~\ref{fig:gate_opt_sum}, from which we can see:\\
\noindent 1) Qiskit's \texttt{StochasticSWAP} performs significantly worse  across all six benchmark sets. \\
\noindent 2) \sabre\ performs the best on all but one benchmark sets; its average cx ratio (cf. Eq.~\ref{eq:rho_gate}) on the \qknob\ benchmark set `20Q\_gate\_Tokyo' is only slightly worse than that of \mcts-size (1.74 vs. 1.70). Its performance on the two 53-qubit \qknob\ benchmark sets is conspicuously ($>17\%$) better than the other algorithms.

\textbf{For depth optimality},  we evaluate, in addition to the three \queko\ sets, \qknob\ sets `53Q\_depth\_Sycamore', `53Q\_depth\_Rochester', and `20Q\_depth\_Tokyo'. The evaluation results are summarised in Fig.~\ref{fig:depth_opt_sum}, from which we can see:\\
\noindent 1) Qiskit's \texttt{StochasticSWAP} performs clearly worst on all but `54Q\_bntf\_Sycamore'; its performance on `54Q\_bntf\_Sycamore' is the second-worst.
\\
\noindent 2) \sabre\ performs best on all  benchmark sets; its performance on the two 53-qubit \qknob\ benchmark sets is at least $\geq 10\%$ better than that of any other algorithm.

\vspace*{2mm}
When comparing performances on the \qknob\ benchmarks across different architectures, Figures~\ref{fig:gate_opt_sum} and~\ref{fig:depth_opt_sum} reveal the following general trends: 
\[\sabre\! \prec \mcts\! \prec   \sahs\!  \prec  \text{t}\!\ket{\mathrm{ket}}\!  \prec \texttt{StochasticSWAP},\] where $\prec$ denotes `better than'. Note that \sabre\ $\prec$ \mcts\ is violated only on `20Q\_gate\_Tokyo', where the cx ratios of \sabre\ and \mcts\ are 1.74 and 1.70, respectively. Furthermore, the cx/depth ratios on Sycamore are generally lower than those on Rochester, reflecting Sycamore's greater qubit connectivity.

\begin{remark}
    \sabre\ is faster than all tested algorithms except \texttt{StochasticSWAP}. To investigate whether repeated executions of \texttt{StochasticSWAP} could improve results, we compared the algorithms on the ten `53Q\_depth\_Rochester\_large\_opt\_10\_2.55' circuits. The results are presented below, where `Ratio' denotes the depth ratio; \mcts\ denotes \mcts-depth; and \texttt{SSx5}, \texttt{SSx100}, \texttt{SEx5}, and \texttt{SEx100} indicate repeating \texttt{StochasticSWAP} or \sabre\ 5 or 100 times, respectively.
    
    \begin{table}[h]
    \centering
    \setlength{\tabcolsep}{2pt} 
    \small 
    \begin{tabular}{c|ccccccc}
    Alg.\! & \sahs & \mcts & \tket &\! \texttt{SSx5} &\! \texttt{SSx100} & \texttt{SEx5} & \texttt{SEx100} \\ \hline
    Ratio & 6.80 & 4.03 & 7.89 & 8.07 & 7.16 & 4.11 & \textbf{3.35} \\
    Time (s) & 356 & 2755 & 977 & 76 & 1150 & 159 & 2690
    \end{tabular}
    \end{table}
\end{remark}

\subsection{Comparing \qknob\ with \queko} 

As pointed out in the introduction, \queko\ circuits are designed with optimal transformations that incur zero costs. While this simplifies benchmark construction, it limits \queko's ability to faithfully evaluate QCT algorithms that rely on subgraph isomorphism for initial mapping. For instance, \fidls\ \cite{LiZF21_fidls} achieves optimal transformations on all \queko\ `16Q\_bntf\_Aspen-4' and `20Q\_bss\_Tokyo' circuits, yet its performance on \qknob\ `20Q\_gate\_Tokyo' is comparable to that of \tket, far from being optimal.

Fig.~\ref{fig:gate_opt_sum} (bottom) provides further insights. Both \sabre\ and \mcts-size exhibit near-optimal performance on \queko\ benchmark set `20Q\_bss\_Tokyo' (1.02 vs. 1.06), where the optimality score is 1. This near-optimal performance is unexpected, as neither algorithm uses subgraph isomorphism for initial mapping. This suggests that \queko\ fails to reveal these QCT algorithms' true performance. In contrast, on \qknob\ `20Q\_gate\_Tokyo', the cx ratios for \sabre\ and \mcts-size are 1.74 and 1.70, respectively.  A similar phenomenon is also observed in Fig.~\ref{fig:depth_opt_sum} (bottom): \sabre\ and \mcts-depth achieve near-optimal depth ratios (1.08 vs. 1.22) on \queko\  `20Q\_bss\_Tokyo' but exhibit much higher depth ratios on \qknob\   `20Q\_depth\_Tokyo' (2.59 vs. 2.77). 

Another notable observation is \tket's inconsistent performance on the \queko\ and \qknob\ benchmarks. As shown in Figures~\ref{fig:gate_opt_sum} and~\ref{fig:depth_opt_sum}, \tket\ significantly outperforms \mcts\ and \sahs\ on \queko\   `54Q\_bntf\_Sycamore' but performs worse than these algorithms on the four 53Q \qknob\ benchmark sets. This discrepancy likely stems from \tket's \texttt{GraphPlacement} mapping pass, which can find optimal initial mappings for certain \queko\ benchmarks. 

In summary, \qknob\ addresses the limitations of \queko\ and demonstrates its effectiveness in faithfully evaluating QCT algorithms. This capability stems from \qknob's construction methodology (cf. Alg.~\ref{alg:quekno}) and the theoretical properties established in Theorems~\ref{thm:qct} and ~\ref{thm:quekno}.

\begin{figure*}[tbp]
    \centering
    \begin{tabular}{cc}
   \includegraphics[width=0.45\textwidth]{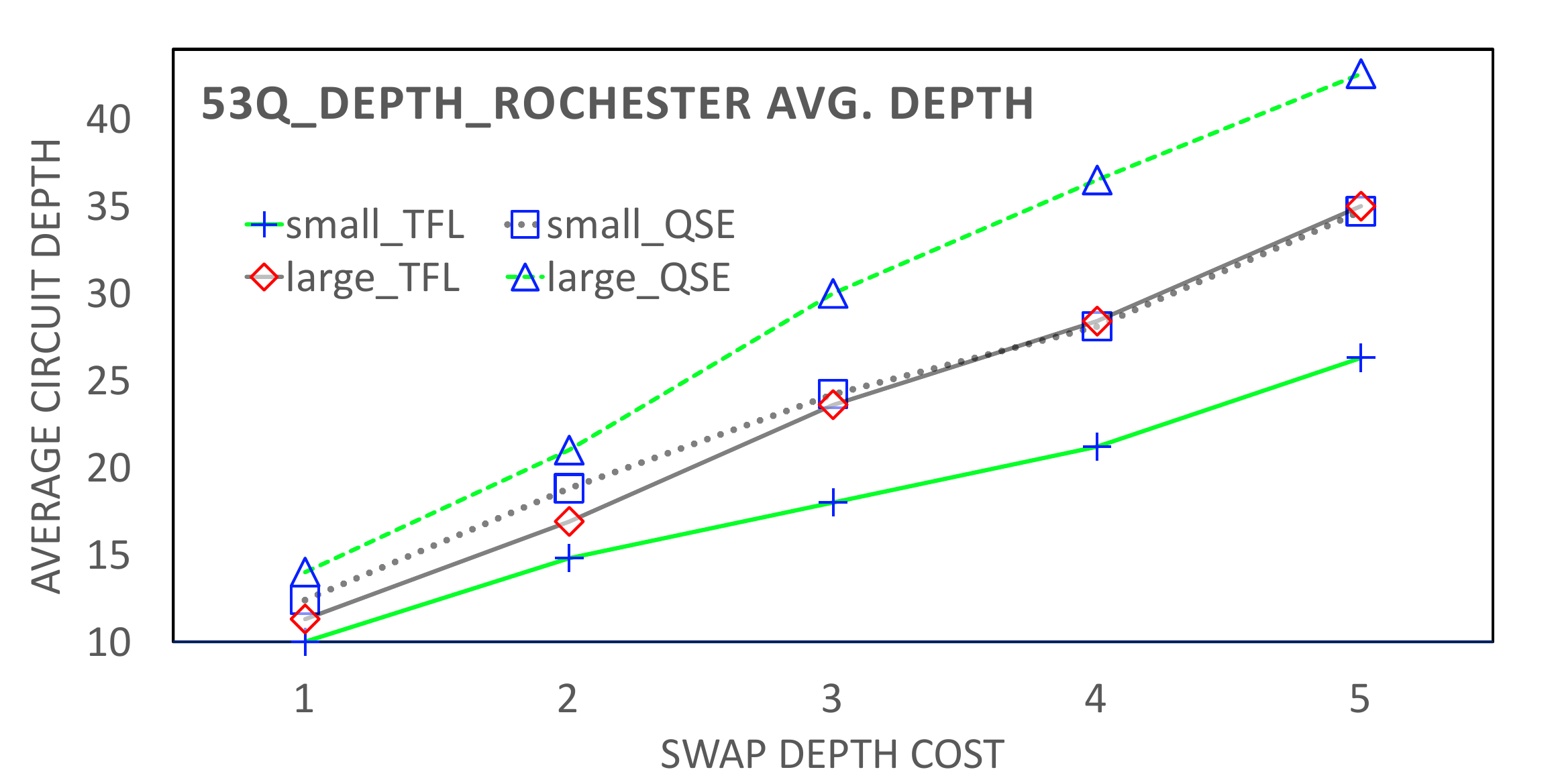}
    &
    \includegraphics[width=0.45\textwidth]{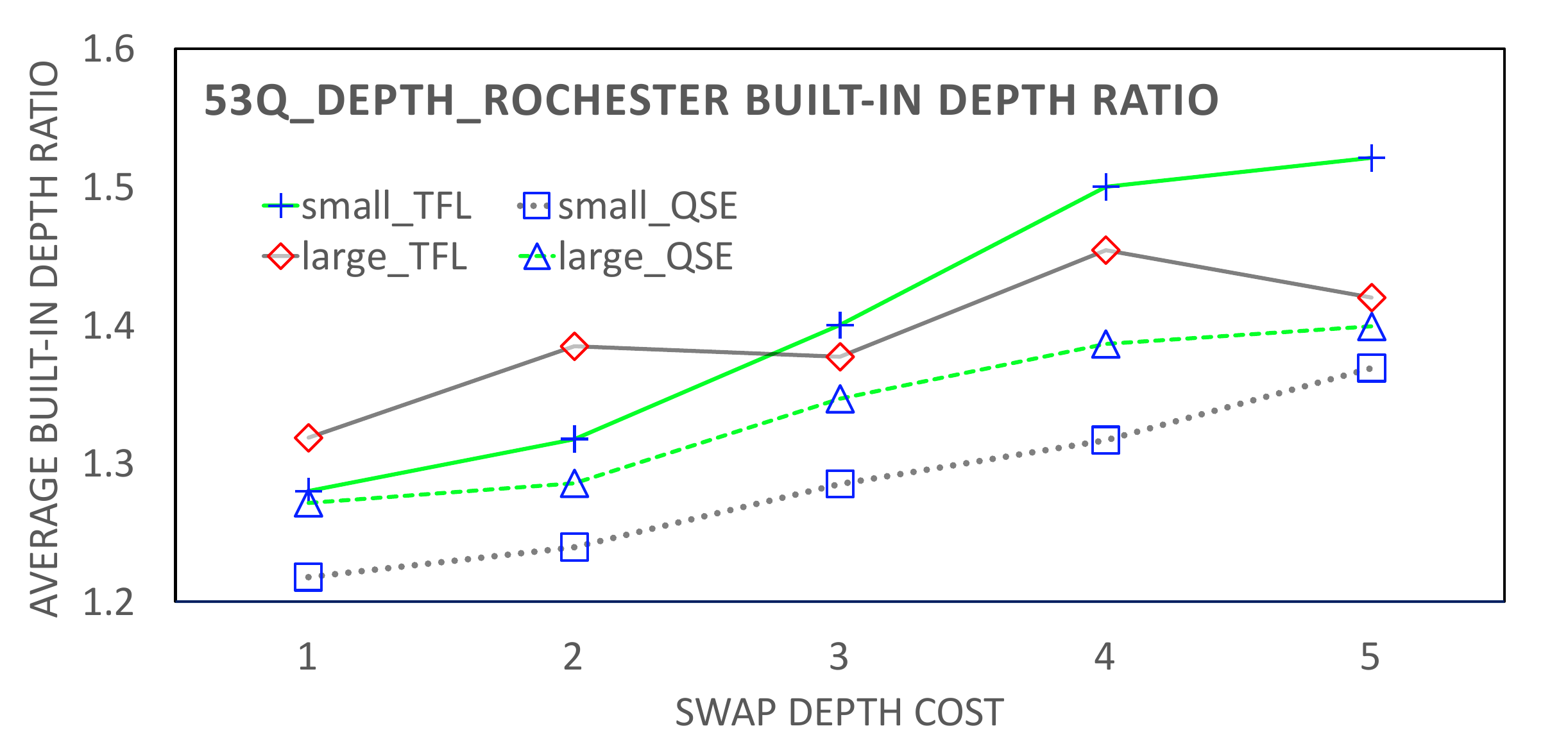} 
    \\
    (a) & (b) \\
    \includegraphics[width=0.5\textwidth]{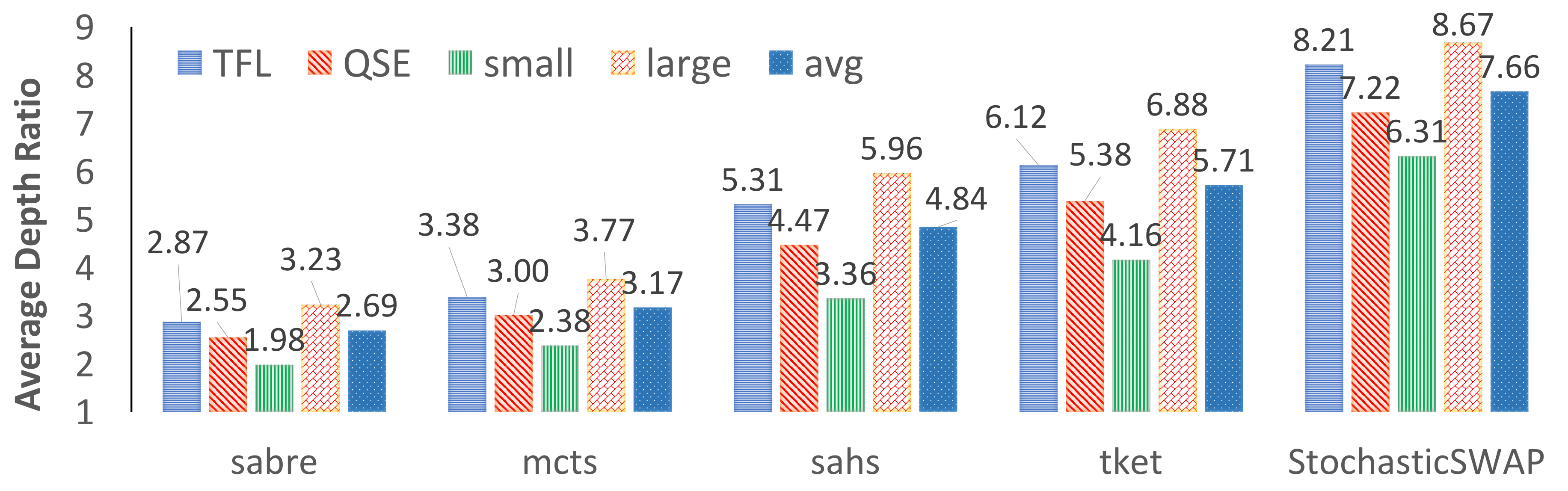} 
&
    \includegraphics[width=0.45\textwidth]{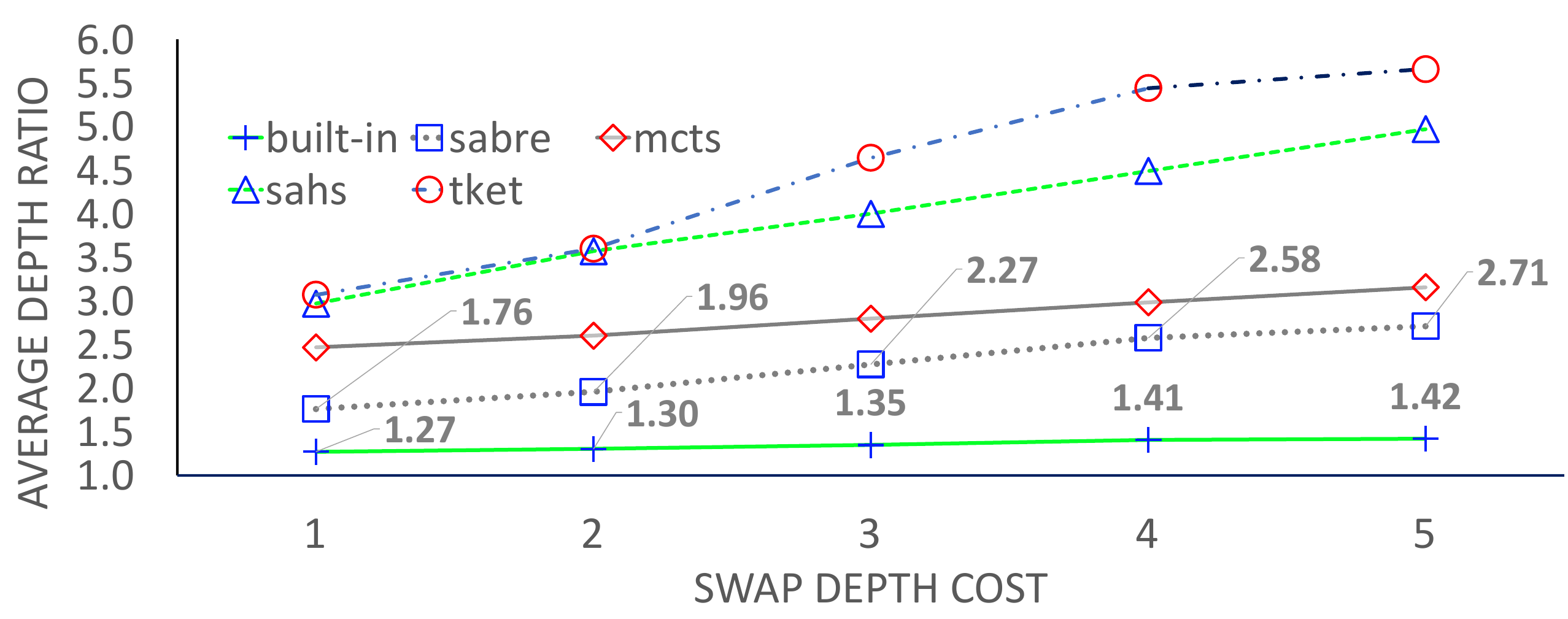}
\\
(c) & (d)
    \end{tabular} 
    \caption{Information and evaluation results for \qknob\  `53Q\_depth\_Rochester' circuits on IBM Q Rochester: (a) average circuit depth; (b) average built-in depth ratios; (c) performance comparison among four subsets (`TFL', `QSE', `small', `large') and the full set (`avg'); (d) evaluation results. In (a), (b), and (d), the $x$-axis denotes the target SWAP depth cost, and each point represents the average value over ten \qknob\ circuits with the same parameters.}
    \label{fig:rochester_depth_info}
\end{figure*}

\subsection{Benchmark Information}
{This subsection provides a detailed examination of the \qknob\ benchmarks used in our evaluation.} Due to space limitation, we narrow our discussion to the Rochester benchmark sets for depth optimality and analyse how the depths of the benchmarking circuits and the depth ratios associated with their built-in transformations (henceforth `built-in depth ratios') vary with target depth cost.

In Fig.~\ref{fig:rochester_depth_info}(a,b), we examine four sets of \qknob\ `53Q\_depth\_Rochester' circuits, characterised by their graph size (`large' or `small') and qubit gate ratio (`TFL', corresponding to $\rho_\text{qbg}=1.5$, or `QSE', corresponding to $\rho_\text{qbg}=2.55$). The target SWAP depth costs are within the values 1 to 5. For clarity, data for circuits with a target SWAP depth cost of 10---which have average depths between 40 and 80---are omitted from the plots. 

Fig.~\ref{fig:rochester_depth_info}(a) demonstrates that the average depths of these circuits scale linearly with the target depth cost. Circuits with a `large'  graph size and a  `QSE' qubit gate ratio exhibit larger depths compared to their counterparts with a `small'  graph size and a qubit gate ratio.  This trend highlights the impact of graph size and qubit gate ratio on circuit depth.

Fig.~\ref{fig:rochester_depth_info}(b) compares the built-in depth ratios across these circuits. It shows that `TFL' circuits generally have greater built-in depth ratios than their `QSE' counterparts. This discrepancy is partially attributable to `QSE' circuits containing more one-qubit gates, which increase the overall depth. Additionally,  the figure reveals that the depth ratios of \qknob\ circuits scale effectively (at most linearly) with increasing target SWAP depth costs.

{These observations validate the design of \qknob\ benchmarks, demonstrating their scalability and ability to represent diverse circuit properties, which are essential for evaluating depth optimality in QCT algorithms.}

\subsection{Factors That Affect Optimality}

The design of \qknob\ circuits highlights three critical factors that influence QCT algorithms's SWAP count (or depth) optimality: target cost, permutation type (or qubit gate ratio), and subgraph size. 

For large devices like Rochester, the subgraphs used in glinks are classified as either `small' (approximately 8 edges) or `large' (approximately 16 edges). Circuits with `small' subgraphs naturally contain fewer gates. Additionally, \qknob\ circuits with the `TFL' qubit gate ratio have fewer one-qubit gates but a comparable number of two-qubit gates compared to those with the `QSE' ratio.

Fig.~\ref{fig:rochester_depth_info}(c) reveals two key observations:
\begin{enumerate}
    \item \textbf{Significant Effect of Qubit Gate Ratio:} The qubit gate ratio has a noticeable impact on QCT algorithms, often exceeding a $10\%$ difference; 
    \item \textbf{Greater Impact of Subgraph Size:} The subgraph size has an even larger effect, with QCT algorithms achieving significantly lower depth ratios (often more than 50\% lower) on circuits with small subgraphs. This is partly because smaller subgraphs lead to fewer total gates and reduced circuit depth. \emph{Circuits with large subgraphs are inherently more challenging to transform.} For instance, \sabre\ achieves an average depth ratio of 3.23 on circuits with large subgraphs, which is 63\%  higher than its average depth ratio of 1.98 on circuits with small subgraphs.
\end{enumerate} 

Fig.~\ref{fig:rochester_depth_info}(d) shows that the performance of QCT algorithms scales well with increasing target depth costs in the construction of \qknob\ circuits.

\section{Further Discussion}
\label{sec:discussion}

\noindent\textbf{Scalability:} The proposed \qknob\ benchmark construction method operates in time linear to the number of gates. However, it requires a subgraph isomorphism check to determine whether a glink is strong (line~8 of Alg.~\ref{alg:quekno}), which is not polynomial in the number of qubits. This limitation is manageable for two reasons. First, near-term quantum devices typically have no more than several thousand qubits, for which subgraph isomorphism can be checked efficiently. Second, the construction method often only needs to disprove a subgraph isomorphism, which is computationally easier than proving it. For example, it takes about 200 seconds to generate a 900-qubit \qknob\ circuit on $Grid(30,30)$ with permutation type `opt2', graph\_size 450, qubit gate ratio 1.5, and target transformation cost 10.

\noindent\textbf{Generality and Theoretical Guarantees:} 
The construction method described in Sec.~\ref{sec:quekno-theory} is general. Theorems~\ref{thm:qct} and~\ref{thm:quekno} guarantee that for any circuit $C$ and any transformation of $C$, $C$ can be reconstructed as a \qknob\ circuit with the given transformation as its  built-in transformation. Consequently, every \qknob\ circuit has a built-in transformation and an associated known cost, which serves as an upper bound for the optimal transformation cost. 

\noindent\textbf{Addressing the Gap Between Built-In and Optimal Costs:}  Using arbitrary permutations in Eq.~\ref{eq:quekno} makes it difficult to estimate the \emph{gap} between the built-in and optimal costs. For \qknob\ circuits designed to  benchmark SWAP count optimality, permutations generated by one or two consecutive SWAPs help to narrow the gap. In contrast, constructing benchmarking circuits for depth optimality (cf. the paragraph before Sec.~\ref{sec:quekno-dim}) is less theoretically rigorous. Despite this limitation, the  constructed \qknob\ circuits exhibit relatively small built-in depth ratios. For instance, the `53Q\_depth\_Rochester' circuits with a target SWAP layer cost of 5 achieves an average built-in depth ratio of 1.42 (cf.~Fig.~\ref{fig:rochester_depth_info}(d)), translating to an average depth overhead of 0.42---only about a quarter of that incurred by \sabre\ (2.71-1=1.71)!

\noindent\textbf{Limitations of the Framework:} The \qknob\ framework does not account for commutation rules, synthesis optimisations, gate absorption, or remote CNOTs. Consequently, evaluating QCT algorithms employing such techniques \cite{Itoko+19_commutation,Liu+22_not_all_swap,Xie+21_commutativity,TanC21-gate_absorption,Niemann+21_combine_remote_cnot} or  mitigating crosstalk errors \cite{Xie+21_commutativity} may introduce bias.

\noindent\textbf{Adapting to Hardware Characteristics:} Real quantum devices exhibit spatial and temporal variability in qubit reliability and connectivity strengths. The \qknob\ construction method can be adapted to reflect these characteristics by leveraging 
calibration data to select subgraphs comprising highly reliable qubits and edges. Similarly, permutations for generating strong glinks can prioritise highly reliable edges. This approach effectively reverses the variability-aware qubit mapping strategy in \cite{Murali+19,TannuQ19}, tailoring benchmarks to specific hardware characteristics. 

\section{Conclusion}
\label{sec:conclusion}

In this work, we introduced \qknob, a novel benchmarking framework for quantum circuit transformation (QCT) that features circuits with built-in transformations and near-optimal costs. Unlike existing benchmarks, such as \queko, \qknob\ reflects the general QCT process, enabling the generation of representative and versatile benchmarking circuits. {Researchers can also use our open-source benchmark construction algorithm to customise benchmarks for specific devices or algorithm evaluations.}



Through detailed experiments, we demonstrated \qknob’s fairness and comprehensiveness in evaluating QCT algorithms, including those leveraging subgraph isomorphism (e.g., \tket). Notably, \sabre\ emerged as the best-performing algorithm on large devices, highlighting its potential for scaling to future quantum hardware. {However, our results also reveal a significant gap between the performance of state-of-the-art QCT algorithms and the built-in transformations of \qknob\ circuits. For instance, on medium-depth circuits, even the best-performing algorithms incur more than four times the overhead of the built-in transformations (cf.Fig.\ref{fig:rochester_depth_info}(d)). This underscores the need for further innovation to narrow this gap.}


In summary, \qknob\ provides a robust and scalable framework for benchmarking QCT algorithms, advancing the evaluation of both existing and emerging algorithms. While it highlights the strengths of current approaches, it also identifies critical areas for improvement. Future research could focus on developing more optimal QCT algorithms, incorporating hardware-aware adaptations, and leveraging \qknob\ to explore new strategies for enhancing real-world performance.

\section*{Code and Benchmark Availability}
{Our algorithm (implemented in Python 3) and benchmarks are publicly available through the MIT license 
at \url{https://github.com/ebony72/quekno}. 
}

\appendix[Proofs]
\begin{proof}[Proof of Lemma~\ref{lem:permcomp}]
 Let $S_1 = \big( \swap{p_1}{q_1},\ldots,\swap{p_c}{q_c} \big)$ and $S_2 = \big( \swap{p_{c+1}}{q_{c+1}},\ldots,\swap{p_{c+d}}{q_{c+d}} \big)$, where $c,d\geq 1$ and each $(p_j,q_j)$ is an edge in $G$. By definition, we have {$\pi_1 = \pi_{p_c,q_c}\circ\cdots\circ\pi_{p_1,q_1}$ and $\pi_2 = \pi_{p_{c+d},q_{c+d}}\circ\cdots\circ\pi_{p_{c+1},q_{c+1}}$. It is now clear that $\pi_2\circ \pi_1=\pi_{p_{c+d},q_{c+d}}\circ\cdots\circ\pi_{p_{c+1},q_{c+1}}\circ \pi_{p_{c},q_{c}}\circ\cdots\circ\pi_{p_1,q_1}$} is implemented by $S_1+S_2= \big( \swap{p_1}{q_1},\ \ldots,\ \swap{p_c}{q_c},\  \swap{p_{c+1}}{q_{c+1}},\ \ldots$, $\swap{p_{c+d}}{q_{c+d}} \big)$.
\end{proof}

\begin{proof}[Proof of Lemma~\ref{lem:2}]
 This follows directly from Definition~\ref{dfn:pcirc}.
\end{proof}

\begin{proof}[Proof of Theorem~\ref{thm:qct}]
By assumption, $C$ has a transformation with cost $c$. That is, we can transform $C$ into an executable circuit using an initial (logical to physical) mapping $\sigma_1$ and $c$ SWAP gates $\swap{p_1}{q_1},\ldots,\swap{p_c}{q_c}$. Let $C_1$ be $C$'s subcircuit containing all gates executable under $\sigma_1$. Due to optimality, $C_1$ cannot be empty. Removing all gates in $C_1$ and inserting the SWAP gates one by one until some gates in the remainder of $C$ are executable under the current mapping. Let $S_1$ denote this SWAP circuit and $\pi_2$ denote the inverse of the permutation implemented by $S_1$. Then $\sigma_2 \define \pi_2^{-1}\circ\sigma_1$ is the current (logical to physical) mapping. Continuing the above procedure until all gates in $C$ are executed, we partition $C$ into $1\leq s\leq c+1$ nonempty subcircuits $C_1,\ldots, C_s$ and partition the $c$ SWAP gates into $s-1$ nonempty SWAP circuits $S_1,\ldots,S_{s-1}$ such that $S_i$ is inserted between $C_i$ and $C_{i+1}$. Let $\pi_{i+1}$ be the inverse of the permutation implemented by $S_i$  and let $\sigma_{i+1} \define \pi_{i+1}^{-1}\circ\sigma_i$. It can be proved inductively that $C_i$ is executable under $\sigma_i$ for $1\leq i\leq s$. Clearly, $C$ has the form as shown in Eq.~\ref{eq:circpart}. Because $\pi_{i+1}^{-1}=\sigma_{i+1}\circ\sigma_i^{-1}$ and $S_i=\sem{\pi_{i+1}^{-1}} $ for each $1\leq i\leq s-1$,  the transformed circuit has the form shown in Eq.~\ref{eq:physical-circuit-final}.
\end{proof}

\begin{proof}[Proof of Lemma~\ref{lem:glink}]
 Taking the identity mapping $id_\V$ as the initial mapping, gates in $\widetilde{C}_1$ can be executed and removed from $C$ directly. The current mapping remains to be $id_\V$. If we insert a SWAP circuit that implements {$\pi^{-1}$, the current logical to physical mapping becomes $\pi^{-1}$}, which can execute $\pi(\widetilde{C}_2)$ because $\pi^{-1}(\pi(\widetilde{C}_2))=\widetilde{C}_2$ is an $\ag$-circuit. 
\end{proof}

\begin{proof}[Proof of Theorem~\ref{thm:quekno}]
 For each $1\leq i\leq s$, let $\sigma_i$ denote the permutation $(\pi_1\circ \cdots\circ \pi_i)^{-1}$. Then $C$ is represented in the form shown in Eq.~\ref{eq:circpart}. Following the analysis given in Sec.~\ref{sec:qct} or directly by Lemma~\ref{lem:glink}, we can show the correctness of the transformation. Because the cost of this transformation is $c \define \sum_{i=2}^s {\swapnorm{\pi_i^{-1}} =\sum_{i=2}^s \swapnorm{\pi_i}}$, we know that the optimal cost is no greater than $c$.
\end{proof}

\bibliographystyle{IEEEtran}
\bibliography{qct-lsj,qct24plus}

\newpage
\begin{IEEEbiography}[{\includegraphics[width=1in,height=1.25in,clip,keepaspectratio]{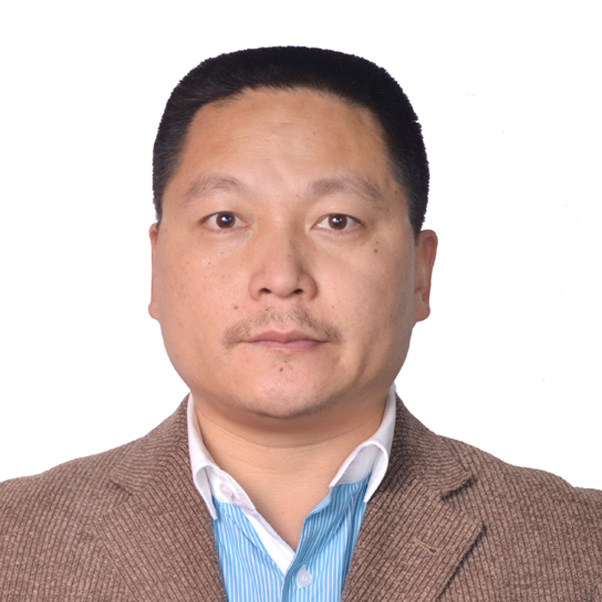}}]{Sanjiang Li}
 received his B.Sc. and Ph.D. degrees in mathematics from Shaanxi Normal University, in 1996, and Sichuan University, in 2001, respectively. He is a professor in Centre for Quantum Software and Information,  University of Technology Sydney (UTS). 
 His research interests are mainly in knowledge representation, artificial intelligence, and quantum circuit compilation. 
\end{IEEEbiography}

\begin{IEEEbiography}[{\includegraphics[width=1in,height=1.25in,clip,keepaspectratio]{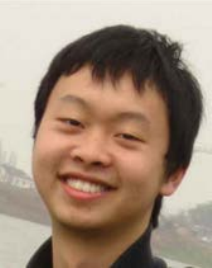}}]{Xiangzhen Zhou} received his BE (2010) and PhD (2022) degrees in engineering from Nanjing Normal University and Southeast University, China. From 2018 to 2020, he was a visiting student at the Centre for Quantum Software and Information, University of Technology Sydney. 
He is currently a lecturer at Nanjing Tech University, China. His research interests include quantum computing and quantum circuit optimisation.
\end{IEEEbiography}

\begin{IEEEbiography}[{\includegraphics[width=1in,height=1.25in,clip,keepaspectratio]{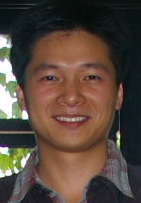}}]{Yuan Feng}
received his BS degree in Applied Mathematics and PhD degree in Computer Science from Tsinghua University, in 1999 and 2004 respectively. He is currently a professor at the Department of Computer Science and Technology, Tsinghua University, China. His research interests include quantum programming theory, quantum information and quantum computation, and probabilistic systems.
\end{IEEEbiography}

\end{document}